\pgfplotsset{compat=1.7}
\setlist{nosep,leftmargin=\parindent,}
\definecolor{CYAN}{rgb}{0,1,1}
\crefname{section}{\S}{\S\S}
\Crefname{section}{\S}{\S\S}
  \crefname{appendix}{Appendix}{Appendices}
  \Crefname{appendix}{Appendix}{Appendices}
  \crefname{subappendix}{Appendix}{Appendices}
  \Crefname{subappendix}{Appendix}{Appendices}
\newtheorem{thm}{Theorem}[section]
\newtheorem{lem}{Lemma}[section]
\newtheorem{defn}{Definition}[section]
\newcommand {\qbar} {{\overline{q}}}
\newcommand {\tr} {{\mathrm{Tr}}}
\newcommand{\SWAP}{\textit{SWAP}}
\DeclareMathOperator{\supp}{supp}
\newcommand{\sem}[1]{\llbracket #1 \rrbracket}
\newcommand{\op}[2]{|#1\rangle \langle #2|}
\newcommand{\ip}[2]{\langle #1|#2\rangle }
\newcommand{\CNOT}{\textit{CNOT}}
\newcommand{\SKIP}{\ensuremath{\mathbf{skip}}}
\newcommand{\logic}{\textsc{SAQR-QC}\xspace}
\newcommand{\naturals}{\mathbb{N}}
\newcommand{\ketbra}[2]{\left| #1 \right\rangle \!\left\langle #2 \right|}
\newcommand{\twr}[1]{{\color{magenta}{T: #1}}}
\newcommand{\twrchanged}[1]{{\color{cyan}{#1}}}
\newcommand{\ny}[1]{{\color{ForestGreen}{#1}}}
\newcommand{\revision}[1]{{\color{ForestGreen}{#1}}}
\begin{document}
\title[\logic: A Logic for Scalable but Approximate Quantitative Reasoning about Quantum Circuits]{\logic: A Logic for Scalable but Approximate \\ Quantitative Reasoning about Quantum Circuits}

\author{Nengkun Yu}
\orcid{0000-0003-1188-3032}
\affiliation{%
  \institution{Stony Brook University}
  \city{Stony Brook}
  \country{USA}
}
\email{nengkun.yu@cs.stonybrook.edu}

\author{Jens Palsberg}
\orcid{0000-0003-4747-365X}
\affiliation{%
  \institution{University of California at Los Angeles}
  \city{Los Angeles}
  \country{USA}
}
\email{palsberg@ucla.edu}

\author{Thomas Reps}
\orcid{0000-0002-5676-9949}
\affiliation{%
  \institution{University of Wisconsin-Madison}
  \city{Madison}
  \country{USA}
}
\email{reps@cs.wisc.edu}

%
%

%
\begin{abstract}

Reasoning about quantum programs remains a fundamental challenge, regardless of the programming model or computational paradigm.
Existing verification techniques are insufficient---even for quantum circuits, a deliberately restricted model that lacks classical control, but still underpins many current quantum algorithms. Many existing formal methods require exponential time and space to represent and manipulate (representations of) assertions and judgments, making them impractical for quantum circuits with many qubits. This paper presents \logic, a logic for \textbf{S}calable but \textbf{A}pproximate \textbf{Q}uantitative \textbf{R}easoning about \textbf{Q}uantum \textbf{C}ircuits. \logic has three characteristics:
(i) some deliberate loss of precision is built into it;
(ii) it has a mechanism to help the accumulated loss of precision during a sequence of reasoning steps remain small; and
(iii)
every reasoning step is local---involving just a small number of qubits---making reasoning scalable.
We demonstrate the effectiveness of \logic via two case studies:
the verification of GHZ circuits involving non-Clifford gates, and
the analysis of quantum phase estimation---a core subroutine in Shor’s factoring algorithm.


\end{abstract}

\begin{CCSXML}
<ccs2012>
   <concept>
       <concept_id>10003752.10003753.10003758</concept_id>
       <concept_desc>Theory of computation~Quantum computation theory</concept_desc>
       <concept_significance>500</concept_significance>
       </concept>
   <concept>
       <concept_id>10003752.10003790.10002990</concept_id>
       <concept_desc>Theory of computation~Logic and verification</concept_desc>
       <concept_significance>500</concept_significance>
       </concept>
 </ccs2012>
\end{CCSXML}

\ccsdesc[500]{Theory of computation~Quantum computation theory}
\ccsdesc[500]{Theory of computation~Logic and verification}

\keywords{Verification, quantitative reasoning for quantum circuits}

\maketitle              

\section{Introduction}
\label{sec:intro}

Quantum computing leverages superposition and interference to achieve computational advantages over classical methods for specific tasks. For instance, Shor’s algorithm~\cite{Shor94} efficiently factors large integers using the Quantum Fourier Transform (QFT) and Quantum Phase Estimation (QPE) \cite{Kitaev1995}.
QPE is also a foundational component in quantum-simulation algorithms~\cite{Aspuru-Guzik2005} and in quantum algorithms for solving linear-algebra problems~\cite{HHL08}.

Guaranteeing that quantum programs behave as intended is a fundamental challenge.
This challenge has spurred significant research into verifying quantum programs using classical computers \cite{BHY19,ZYY19,zhou2021quantum,10.1145/3453483.3454029,Unruh,YP21,10.1145/3704873,10.1145/3519939.3523431,CACM:CCLLTY25}. A landmark achievement is Ying's Quantum Hoare Logic (QHL)~\cite{Ying11}, which extends classical Hoare logic to reason about the correctness of quantum programs. QHL uses quantum predicates---semidefinite positive Hermitian operators---as pre- and post-conditions, with 
\revision{
a
}
quantum Hoare triple \cite{DP06} defined as:
\begin{align}\label{eq:qhl}
  \{A\}\ \mathbf{C}\ \{B\} \quad \mathrm{iff} \quad \textrm{for all input state}~\rho, \tr(A\rho) \leq \tr(B \sem{\mathbf{C}}(\rho)).
\end{align}
Here, $\operatorname{Tr}(A \rho)$ represents the probability (or expected value) that the input state $\rho$ satisfies the predicate $A$, while
$\operatorname{Tr}\!\bigl(B\, \llbracket \mathbf{C} \rrbracket(\rho)\bigr)$ represents the probability that the output state satisfies $B$ after executing $C$.
When $\{A\}\ \mathbf{C}\ \{B\}$ holds, the program $C$ ensures that the probability of satisfying the postcondition $B$ is at least as great as the probability of satisfying the precondition $A$.

Despite the development of QHL, scalable verification remains elusive, even for quantum circuits---a restricted model without classical control that encapsulates many current algorithms. Existing methods struggle with exponential complexity; for instance, while QHL is complete for unitary programs, its application requires computing \( U_\mathbf{C}^\dagger B U_\mathbf{C} \) for a general \( n \)-qubit unitary \( U_\mathbf{C} \), a task requiring time and space exponential in \( n \). This intractability motivates our central question:
\noindent
\begin{mdframed}[innertopmargin=3pt, innerbottommargin=3pt]
  \textit{Can we establish a theoretical foundation for \textbf{scalable quantitative reasoning} about quantum programs that use many qubits?}
\end{mdframed}
\noindent
Here, ``scalable'' means that the size of a proof---the matrices
involved and the logical derivation
itself---grows polynomially in the number of qubits.

This paper answers this question affirmatively by introducing \logic\ (\textbf{S}calable but \textbf{A}pproximate \textbf{Q}uantitative \textbf{R}easoning for \textbf{Q}uantum \textbf{C}ircuits), a logic for reasoning about programs expressed as quantum circuits. \logic is designed with three principles:
(i) it embraces a deliberate, controlled loss of precision to achieve scalability;
(ii) it incorporates mechanisms to keep the accumulated imprecision small across reasoning steps;
(iii) all reasoning is \emph{local}, meaning each step involves only a constant number of qubits, independent of the total system size.

\logic is inspired by two prior approaches that sit at opposite ends of a spectrum.
At one end, QHL---see \Cref{eq:qhl}---offers fully precise quantitative reasoning, but is computationally intractable for large systems.
At the other end, Quantum Abstract Interpretation (QAI) \cite{YP21} provides a scalable, qualitative framework. A QAI judgment,
\begin{align*} 
  \models^{\rm QAI}\{\mathcal{P}\}\ \mathbf{C}\ \{\mathcal{Q}\} \quad \mathrm{iff} \quad 
  \mathrm{for all}\,\rho,\, \rho\vDash \mathcal{P}~\text{implies}~ \sem{\mathbf{C}}(\rho)\vDash \mathcal{Q},
\end{align*}
uses tuples of local projections $\mathcal{P}$ and $\mathcal{Q}$ as abstract states. QAI enables efficient, local reasoning but lacks the ability to reason about quantitative properties like success probabilities.

\logic integrates ideas from both to achieve scalable quantitative reasoning.
We wish to stress that \logic is a \emph{logic for manual proof construction}, analogous to how Hoare Logic, Separation Logic, and Linear Logic were first developed as conceptual frameworks to facilitate human-driven reasoning.
Automation, while a vital future direction, is not the aim of this paper.

Each assertion that appears in a \logic judgment has a two-part structure, e.g., $\{\mathscr{A} \mid \mathcal{P}\}$, inspired by the spatial and pure assertions of separation logic \cite{DBLP:conf/fmco/BerdineCO05,DBLP:conf/tacas/DistefanoOY06}.
$\mathscr{A}$ is a tuple of \emph{local observables}, each acting non-trivially on only a constant number of qubits (i.e., the number does not grow as the circuit size increases).
These observables track quantitative information, such as success probabilities---similar to QHL---but restricted to local components.
$\mathcal{P}$ is a QAI-style tuple of local projections, capturing qualitative spatial constraints on the state.
In essence, $\{\mathscr{A} \mid \mathcal{P}\}$ gives us a way of specifying a collection of quantum states
via a tuple of local observables and a tuple of local projections.
\logic uses judgments of the form
\begin{align}\label{eq:saqr-judgment}
  \{\mathscr{A} \mid \mathcal{P}\}\ \mathbf{C}\ \{\mathscr{B} \mid \mathcal{Q}\},
\end{align}
%
%
which asserts that for any input state \( \rho \) satisfying \( \mathcal{P} \), 
(i) the output state \( \sem{\mathbf{C}}(\rho) \) satisfies \( \mathcal{Q} \), and (ii) a certain quantitative relationship involving \( \mathscr{A} \) and \( \mathscr{B} \) holds.
More precisely, judgment \labelcref{eq:saqr-judgment} means
\begin{align}
  \models^{\rm \logic} \{\mathscr{A} \mid \mathcal{P}\}\ \mathbf{C}\ \{\mathscr{B} \mid \mathcal{Q}\}
  \hspace{1.5ex} \mathrm{iff} \hspace{1.5ex}
  \textrm{for all} ~\rho,\, \rho\vDash \mathcal{P}~\textrm{implies}~~
     \begin{aligned}[t]
       & \textrm{(i)}~\sem{\mathbf{C}}(\rho)\vDash \mathcal{Q},~~\mathrm{and} \\
       & \textrm{(ii)}~\tr(M_\mathscr{A}\rho)\leq \tr(M_\mathscr{B}\sem{\mathbf{C}}(\rho)),
     \end{aligned}
\end{align}
where $M_\mathscr{A}$ and $M_\mathscr{B}$ are derived from $\mathscr{A}$ and $\mathscr{B}$, with the precise definition provided in \Cref{Se:CorrectnessFormulasAndTheLogicalSystem}.
Here, $\tr(M_{\mathscr{A}}\rho)$ denotes the total weight (or expectation value) with which the input state $\rho$ satisfies the local observables specified by $\mathscr{A}$, while $\tr(M_{\mathscr{B}}\sem{\mathbf{C}}(\rho))$ denotes the corresponding weight for the output state $\sem{\mathbf{C}}(\rho)$ with respect to $\mathscr{B}$.
When $\{\mathscr{A} \mid \mathcal{P}\}\ \mathbf{C}\ \{\mathscr{B} \mid \mathcal{Q}\}$ holds, we restrict our attention to input states that satisfy $\mathcal{P}$---i.e., $\rho \vDash \mathcal{P}$---and on those states, the program $\mathbf{C}$ ensures that the output state $\sem{\mathbf{C}}(\rho)$ satisfies $\mathcal{Q}$ and that the degree of satisfaction of $\mathscr{B}$ is at least that of $\mathscr{A}$.

This two-part structure is not merely stylistic; it is essential under the locality constraint. Without locality, the two components could be combined into a single QHL triple \cite[Theorems 3.2 and 3.3]{ZYY19}. However, with the restriction to local projections and observables, this combination is generally impossible, justifying the distinct roles of the QAI-like and QHL-like components. Their interaction---inspired by the reduced product in abstract interpretation \cite[\S10.1]{DBLP:conf/popl/CousotC79}, as well as the use of abstract interpretation in logic-based tools~\cite{Crab}---helps to
control the accumulation of imprecision.

\textbf{\textit{Overview.}}
The form of \Cref{eq:saqr-judgment} is motivated by the need for scalable reasoning.
A logic for quantum computing must support sequences of reasoning steps.
For instance, for sequential composition---i.e.,
for input state $\rho_0$,
$\sem{\mathbf{C}_1;\mathbf{C}_2}(\rho_0)
=
\sem{\mathbf{C}_2} (\sem{\mathbf{C}_1}(\rho_0))$---we aim to derive a judgment $\{\mathscr{A}\}\,\mathbf{C}_1;
\mathbf{C}_2 \,\{\mathscr{C}\}$ from (derivations of) judgments of the form $\{\mathscr{A}\}\,\mathbf{C}_1\,\{\mathscr{B}\}$ and $\{\mathscr{B}\}\,\mathbf{C}_2 \,\{\mathscr{C}\}$.
To ensure that we can bound the sizes of $\mathscr{A}$, $\mathscr{B}$, and $\mathscr{C}$ (where $\mathscr{B}$ describes the set of intermediate states $\{ \sem{\mathbf{C}_1}(\rho_0) \mid \rho_0 \models \mathscr{A} \}$, and
$\mathscr{C}$ describes $\{ \sem{\mathbf{C}_2}(\rho) \mid \rho \models \mathscr{B} \}$), we must use a less-general language than that permitted in QHL:
we require the logic to use constraints that have two properties: (i) they are efficiently representable, and (ii) satisfaction ($\rho \models \varphi$) is efficiently computable.

Let $\mathscr{A} := (A_{s_1}, \cdots, A_{s_m})$ and $\mathscr{B} := (B_{s_1}, \cdots, B_{s_m})$ 
be tuples of local observables, where each $A_{s_i}$ and $B_{s_i}$ acts only on a subset of qubits 
$s_i \subseteq \{1,2,\cdots,n\}$.
The tuple $(s_1, \cdots, s_m)$ is fixed in advance, and typically each $s_i$ is chosen independently of the total number of qubits $n$.
This choice defines a class of observables efficiently represented by structures, such as vectors and matrices, whose size grows at most polynomially in the number of qubits:
reasoning about a sub-circuit that acts only on qubits 1 and 2 need only concern those qubits, not the full state space.

In \Cref{Se:CorrectnessFormulasAndTheLogicalSystem}, \Cref{simpleexample} shows that merely restricting QHL to use local observables yields overly imprecise reasoning, which motivates combining local observables $\mathscr{A}$ with the local projections $\mathcal{P}$ from QAI---giving rise to predicates of the form $\{ \mathscr{A} \mid \mathcal{P} \}$ and to triples as in \Cref{eq:saqr-judgment}.

A primary use case of \logic{} is forward reasoning: given pre-state assertion \(\{\mathscr{A}\mid \mathcal{P}\}\) and circuit $\mathbf{C}$, derive a post-state assertion \(\{\mathscr{B}\mid \mathcal{Q}\}\).
Exploiting the compositional nature of quantum circuits, and noting that any circuit can be decomposed into two-qubit unitary operations, it suffices for \logic{} to handle the case where \(\mathbf{C}\) is a two-qubit unitary \(U\). 
In this setting, the derivation of \(\mathscr{B}\) from a given \(\mathscr{A}\) reduces to updating only those local observables in \(\mathscr{A}\) that are relevant to the two qubits on which \(U\) acts.
$\mathcal{Q}$ is derived from $\mathcal{P}$ by the local-reasoning rules of QAI.
In principle, $\mathscr{B}$ can also be influenced by $\mathcal{P}$;
such ``cross-talk'' can be seen in our example \Cref{simpleexample}.
The resulting post-condition may not be the strongest possible; however, it is designed to be sufficiently informative to extract useful quantitative properties, such as the success probability of a computation in the BQP model.\footnote{
  BQP is the class of decision problems solvable by a quantum computer in polynomial time with bounded error, analogous to the class P for classical deterministic computation.
}

\textbf{\textit{Limitations.}}
A fundamental limitation of \logic concerns expressivity.
Its assertion language is based on tuples of local observables and tuples of local projections, making it well suited to reasoning about properties that can be decomposed into local components. 
Consequently, global properties that inherently depend on non-local correlations may not be directly expressible.

Compared with QHL, which allows arbitrary Hermitian operators as predicates, \logic is strictly less expressive: QHL can specify global properties of quantum states, including those involving entanglement across an unbounded number of qubits, whereas \logic is restricted to properties that can be captured via fixed collections of local observables and local projections.
This restriction is essential for scalability, but comes at the cost of expressivity.

On the other hand, \logic is more expressive than existing instances of QAI, which rely on coarse abstract domains. 
In particular, \logic can capture quantitative relationships among multiple observables and track their evolution through circuit transformations, enabling reasoning about properties such as success probabilities in BQP computations. 
Such properties are typically beyond the reach of standard QAI domains, which focus on qualitative
properties.

More broadly, the expressivity of \logic is closely related to the question of how global quantum properties can be reconstructed from local information. 
This connection suggests that techniques from abstract interpretation—particularly results on the inherent limits of abstract domains and their reduced products—may offer useful perspectives. 
At the same time, the quantum setting involves fundamentally different mathematical structures, and it is not yet clear to what extent these classical insights carry over. 
We view this as an important issue that warrants further investigation in the development of \logic.
We discuss how to address “failures” (i.e., from overly loose approximations) in \Cref{PrecisionandMitigation}.

As with any manual logic, constructing a \logic proof requires insight, particularly in choosing the appropriate local projections and observables. The logic provides the framework, but the user must supply the clever insights---the ``eureka'' steps---to define the predicates that make the verification possible. The case studies in \Cref{Se:GHZ} and \Cref{Se:qpe} exemplify this  principle, demonstrating how specific, carefully chosen predicates yield sharp results for non-trivial quantum circuits.

\textbf{\textit{Contributions.}}
This paper lays a foundation for scalable reasoning about quantum computations.
\begin{itemize}
  \item We present \logic, which integrates quantitative, QHL-like reasoning with the scalable, local reasoning of QAI.
  The design of \logic ensures that all reasoning steps are local, leading to proofs whose size is polynomial in the number of qubits and gates.
  
  \item We provide a formal foundation for \logic, defining its assertions and proof rules. 
  \item We demonstrate \logic's utility and expressiveness through two detailed case studies:
    \begin{itemize}
      \item \emph{Verification of GHZ circuits involving non-Clifford gates}: We derive a judgment that characterizes the output state precisely in every aspect except a relative phase factor.
      \item \emph{Analysis of QPE}:
      We derive a judgment showing that for any constant \( k \), the QPE algorithm provides the best estimate of the last \( k \) bits of the phase, with probability at least \( 4/\pi^2 \). A key step is a novel, lossless local-reasoning method for the QFT using QAI. To the best of our knowledge, no prior approach has achieved such a result with a proof that scales polynomially with the system size.
    \end{itemize}
  \item We clarify the relationship between \logic and prior work, particularly Zhou et al.\ \cite{ZYY19}, and discuss the principles for constructing effective predicates for use in our framework.
\end{itemize}

\textbf{\textit{Organization.}}
\Cref{sec:quantumbackground} presents background material about quantum computing, as well as the basics of QAI.
\Cref{Se:CorrectnessFormulasAndTheLogicalSystem} defines local observables as predicates, and formalizes the judgments of \logic, which integrate local observables and QAI.
We employ \logic to reason about the general GHZ circuit (\Cref{Se:GHZ}) and quantum phase estimation (\Cref{Se:qpe}).
\Cref{Se:RelatedWorks} discusses related work.
\Cref{sec:conclusion} concludes the paper and outlines future work.
\iftoggle{TR}{
  Some proofs and derivations are available in 
  Appendices \ref{sec:ProofsOfTwoLemmas}-\ref{sec:ProofOfTwoEquations}.
}{
  References to Appendices \ref{sec:ProofsOfTwoLemmas}-\ref{sec:ProofOfTwoEquations} refer to the appendices of \cite{yu2025logic}.
}


\section{Background \& Notation}
\label{sec:quantumbackground}



To make the paper self-contained, this section briefly reviews basic quantum computing concepts and QAI (following \cite{NI11,YP21,Ying11});
some readers may wish to skip directly to \Cref{Se:CorrectnessFormulasAndTheLogicalSystem}.

\textbf{\textit{Preliminaries.}}
We write $[n] = \{1, \dots, n\}$, $\setminus$ for set difference, and $|s|$ for the cardinality of a set $s$. We assume familiarity with Dirac notation, $\ket{\cdot}$, and standard linear-algebra concepts, including Hilbert spaces, tensor products, orthonormal bases, and inner/outer products.

Linear \emph{operators} on $d$-dimensional complex vector spaces are represented by $d \times d$ matrices $\mathbb{C}^{d \times d}$. The identity is $I$, and the conjugate transpose of $A$ is $A^\dag = (A^T)^*$.
An operator is \emph{Hermitian} if $A=A^\dag$ and \emph{positive semi-definite} if all eigenvalues are nonnegative; its trace is $\tr(A)=\sum_i A_{ii}$.

The \emph{L\"{o}wner order} on Hermitian matrices, $A \le B$ if $B-A$ is positive semidefinite, is fundamental for comparing states and operators in quantum mechanics.



\textbf{\textit{Quantum States.}}
A \emph{quantum state} describes the state of a quantum system. A single qubit \emph{pure} state $\ket{\psi}$ lies in a two-dimensional Hilbert space as a superposition of $\ket{0}$ and $\ket{1}$. An $n$-qubit system resides in a $2^n$-dimensional space, allowing complex superpositions and entanglement. \emph{Mixed states}, represented by a density matrix $\rho$, generalize pure states to probabilistic mixtures. 



\textbf{\textit{Reduced Density Matrices.}}
Reduced density matrices are central to analyzing multipartite quantum systems, as many properties depend solely on subsystem reductions. In quantum computation, for example, the success probability of algorithms like HHL~\cite{HHL08} depends only on the reduced density matrix of the ``signal'' qubit. Measuring this qubit, success (outcome $\ket{1}$) depends only on its reduced state, independent of global entanglement or structure.

Let $\mathbb{C}^{d_1}$ and $\mathbb{C}^{d_2}$ be the Hilbert spaces of two quantum systems. The composite system lives in $\mathbb{C}^{d_1} \otimes \mathbb{C}^{d_2}$, and analyzing subsystems uses the \emph{partial trace}. The partial trace over $\mathbb{C}^{d_1}$, $\tr_1(\cdot)$, maps operators on $\mathbb{C}^{d_1} \otimes \mathbb{C}^{d_2}$ to $\mathbb{C}^{d_2}$:
$\tr_1\big( \ket{\varphi_1}\bra{\psi_1} \otimes \ket{\varphi_2}\bra{\psi_2} \big) = \braket{\psi_1 | \varphi_1} \, \ket{\varphi_2}\bra{\psi_2}$, for all $\ket{\varphi_1}, \ket{\psi_1} \in \mathbb{C}^{d_1}$ and $\ket{\varphi_2}, \ket{\psi_2} \in \mathbb{C}^{d_2}$, extended linearly.
Similarly, $\tr_2(\cdot)$ traces out $\mathbb{C}^{d_2}$.

For an $n$-qubit system and $s \subseteq [n]$, the reduced density matrix is \(\rho_s = \tr_{[n] \setminus s}(\rho),
\)
with $\tr_{[n] \setminus s}$ tracing out all qubits not in $s$. The partial trace preserves positive semi-definiteness~\cite{NI11}.

\textbf{\textit{Unitary Operations.}}
%
Unitary operations, represented by matrices $U$ with $U^\dag U = I$, preserve the norm of quantum states and are fundamental for manipulating them and implementing algorithms. They act on pure states as $\ket{\psi} \mapsto U\ket{\psi}$ and on density operators as $\rho \mapsto U \rho U^\dag$. Commonly used single-qubit operators include
the Pauli gates \( I \), \( X \), \( Y \), and \( Z \);
the Hadamard gate \( H \); the \( T \) gate;
the family of gates $\set{R_m \mid m \in \naturals}$.
Commonly used two-qubit gates include the SWAP operation \( \text{SWAP} \) and the controlled-NOT operation \( \text{CNOT} \).
\begin{small}
\begin{gather*}
I = \begin{pmatrix} 1 & 0 \\0 & 1 \end{pmatrix} \qquad
X = \begin{pmatrix} 0 & 1 \\ 1 & 0 \end{pmatrix} \qquad
Y = \begin{pmatrix} 0 & -i \\ i & 0 \end{pmatrix} \qquad
Z = \begin{pmatrix} 1 & 0 \\ 0 & -1 \end{pmatrix} \qquad
H = \frac{1}{\sqrt{2}}\begin{pmatrix} 1 & 1\\ 1 & -1 \end{pmatrix}
\\
T = \begin{pmatrix} 1 & 0 \\ 0 & e^{i\pi/4} \end{pmatrix} \ \qquad R_{m}={\begin{pmatrix}1&0\\0&e^{2\pi i/2^{m}}\end{pmatrix}} \qquad\ \ \ 
\SWAP = \begin{pmatrix}
1 & 0 & 0 & 0 \\
0 & 0 & 1 & 0 \\
0 & 1 & 0 & 0 \\
0 & 0 & 0 & 1
\end{pmatrix}
 \ \qquad
\CNOT = \begin{pmatrix}
1 & 0 & 0 & 0 \\
0 & 1 & 0 & 0 \\
0 & 0 & 0 & 1 \\
0 & 0 & 1 & 0
\end{pmatrix}
\end{gather*}
\end{small}

\textbf{\textit{Observables.}}
%
Quantum observables represent measurable quantities, such as position, energy, or spin, and are modeled by \emph{Hermitian} operators $O$ that satisfy $O^\dag = O$, which ensures real-valued measurement outcomes. In contexts like QHL, observables are often restricted to $0 \le O \le I$ in the \emph{L\"{o}wner order}, meaning both $O$ and $I-O$ are positive semidefinite.
These observables serve as predicates or quantum effects, capturing partial truth values in program verification.

\textbf{\textit{Quantum Circuits and Semantics.}}
We consider quantum programs, represented as circuits on $n$ qubits, composed of $p$ unitary gates $U_{f_1}, \dots, U_{f_{p}}$, where each $U_{f_\ell}$ acts on a subset of qubits $f_\ell \subseteq [n]$. The program starts in $\ket{0}^{\otimes n} = \ket{0^n}$, and its semantics is given by $U_{f_{p}} \cdots U_{f_1} \ket{0^n}$.

Each gate is lifted to an $n$-qubit unitary by tensoring with identities. For a single-qubit gate $U$ acting on qubit $i$,
$U \otimes I_{[n]\setminus\{i\}} := (\otimes_{k>i} I) \otimes U \otimes (\otimes_{0 < j < i} I)$,
and similarly for two-qubit gates $U_{f_\ell}$ on $f_\ell=\{i,j\}$, $U \otimes I_{[n]\setminus\{i,j\}}$,
with placement determined by the qubit indices.  
For clarity, we describe a program's semantics assuming 2-qubit gates, although all results extend to gates acting on up to $m$ qubits for any constant $m$.




\begin{definition}[Syntax] \label{def:syntax}
The syntax of quantum programs is given by
\[
\mathbf{C} ::= \SKIP \mid \bar{q} := U[\bar{q}] \mid \mathbf{C}_1; \mathbf{C}_2
\]
\end{definition}

We write \( \sem{\mathbf{C}} \) to denote the semantics of a quantum program \( \mathbf{C} \). If \( \mathbf{C} \) represents a unitary transformation \( U_C \), then for any input density matrix \( \rho \), its semantics is given by
$\sem{\mathbf{C}}(\rho) := U_C \rho U_C^\dag$.
We also define the dual action on observable $A$ as:
$\sem{\mathbf{C}}^*(A) := U_C^\dag A U_C$.
We define
\( \mathcal{U} := \lambda x.U x U^\dagger \) to denote the quantum operation on density matrices that corresponds to the unitary matrix \( U \) (which operates on quantum states). That is, \( \mathcal{U} \rho=\lambda x.U x U^\dagger \rho= U \rho U^\dag.\)

\textbf{\textit{Projections.}}
%
%
%
An orthogonal projection satisfies $P = P^\dag = P^2$, a stronger condition than the classical $P = P^2$. We simply call such matrices \emph{projections}. Each projection $P$ corresponds to a subspace $S_P = \{v \mid Pv = v\}$, and we use ``projection'' and ``subspace'' interchangeably.
Viewing projections as subspaces induces a partial order: $P \subseteq Q$ iff $S_P \subseteq S_Q$. For example, $\op{00}{00} + \op{11}{11}$ is a rank-2 projection onto a 2D subspace of $\mathbb{C}^4$.

Projections are positive semidefinite. The support of a positive semidefinite matrix $A$, $\supp(A)$, is the span of eigenvectors with nonzero eigenvalues. A density matrix $\rho$ \emph{satisfies} $P$, written $\rho \vDash P$, if $\supp(\rho) \subseteq P$, equivalently $P\rho = \rho$~\cite{BvN36}.

\textbf{\textit{Lemmas.}}
Our development of \logic relies on three fundamental operations on operators:
\begin{itemize}
  \item \textbf{Löwner order} of operators (denoted \( A \leq B \)),
  \item \textbf{Partial trace and trace operators} (denoted \( \tr_s \) and \( \tr \), where \( \tr_s \) traces out subsystem \( s \)),
  \item \textbf{Expansion} of an operator via tensor product (denoted \( A_s \otimes I_{[n] \setminus s} \), where the operator \( A_s \) acts on subsystem \( s \) and is expanded to the full system).
\end{itemize}


The following lemmas summarize key algebraic relationships among these operations, forming the basis for the correctness of our \logic framework. Unless stated otherwise, let $s \subseteq [n]$; $P$ be a projection on an $n$-qubit system; $A,B$ be positive semidefinite matrices; $E$ a matrix; and $\rho$ a quantum state. The proof of \Cref{rdm} is given in \Cref{sec:ProofsOfTwoLemmas}.
The other lemmas can be proven using the definition of trace, partial trace, and support.

The relevance of these lemmas to \logic is that
(i) a positive semidefinite matrix $A$ serves as a predicate on a density matrix $\rho$ via the expectation value $\tr(A\rho)$; and
(ii) a subclass of such predicates with good locality properties are ones defined by a positive semidefinite matrix $A_s$ that only acts on a qubit set $s \subseteq [n]$, producing the expectation value $\tr\left( (A_s \otimes I_{[n] \setminus s}) \rho \right) = \tr\left( A_s \rho_s \right)$.

\begin{lem}\label{rdm}
Let \( \rho \) be the density matrix of an \( n \)-qubit system, and let \( s \subseteq [n] \). Then for any observable \( A_s \) acting on subsystem \( s \), $\tr\left( (A_s \otimes I_{[n] \setminus s}) \rho \right) = \tr\left( A_s \rho_s \right)$.
\end{lem}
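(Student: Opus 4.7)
The plan is to reduce the identity to the case of a single product term by linearity of both sides in $\rho$, and then verify it directly using the multiplicativity of trace on tensor factors. Since $\tr$, $\tr_s$, and multiplication by $A_s \otimes I_{[n]\setminus s}$ are all linear in $\rho$, it suffices to establish the equation for $\rho = \sigma \otimes \tau$, where $\sigma$ acts on the subsystem indexed by $s$ and $\tau$ on the complementary subsystem. By further expanding $\sigma$ and $\tau$ in orthonormal bases, we may even assume $\sigma = \ket{i}\bra{j}$ and $\tau = \ket{k}\bra{l}$.

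First I would compute the left-hand side on such a product: using $(A_s \otimes I_{[n]\setminus s})(\ket{i}\bra{j} \otimes \ket{k}\bra{l}) = (A_s\ket{i}\bra{j}) \otimes \ket{k}\bra{l}$, and then applying $\tr(X \otimes Y) = \tr(X)\tr(Y)$, the trace equals $\tr(A_s\ket{i}\bra{j}) \cdot \tr(\ket{k}\bra{l}) = \delta_{kl}\,\tr(A_s\ket{i}\bra{j})$.

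Next I would compute the right-hand side. By the defining action of the partial trace given in the paper, $\tr_{[n]\setminus s}(\ket{i}\bra{j} \otimes \ket{k}\bra{l}) = \ket{i}\bra{j} \cdot \tr(\ket{k}\bra{l}) = \delta_{kl}\ket{i}\bra{j}$, so $\rho_s = \delta_{kl}\ket{i}\bra{j}$ and hence $\tr(A_s\rho_s) = \delta_{kl}\,\tr(A_s\ket{i}\bra{j})$, matching the left-hand side. Extending back by linearity completes the proof.

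There is no real mathematical obstacle here; the only delicate point is notational. When $s \subseteq [n]$ is not a contiguous block of qubits, the symbol $A_s \otimes I_{[n]\setminus s}$ should be understood up to the canonical reordering of tensor factors that permutes the subsystems in $s$ to the front, and likewise for $\rho = \sigma \otimes \tau$. Since trace and partial trace are invariant under such reorderings, the computation above goes through unchanged, and the identity follows.
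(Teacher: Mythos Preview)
Your proof is correct and follows essentially the same approach as the paper: both arguments use linearity in $\rho$ to reduce to elementary tensor terms and then invoke the multiplicativity of the trace under tensor products. The only cosmetic difference is that the paper expands $\rho$ with respect to a basis of the complementary subsystem only, writing $\rho = \sum_{i,j} \rho_{i,j} \otimes \ket{i}\bra{j}$ with $\rho_{i,j}$ operators on $s$, whereas you expand both tensor factors; the computations are otherwise identical.
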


\begin{lem} \label{lem:interproduct}
For two square matrices $B$ and $E$ of the same size,
$\tr(BE)=\tr(EB)$.
\end{lem}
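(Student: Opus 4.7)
The plan is to verify the identity by direct expansion of both sides in coordinates, using only the definitions of matrix multiplication and of the trace as the sum of diagonal entries. Since the lemma is the standard cyclic property of the trace for square matrices of compatible dimensions, no deep machinery is needed and the only thing to justify is a legal reordering of a finite double sum.

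Concretely, let $B, E$ be $d \times d$ matrices. First I would write
\[
  \tr(BE) \;=\; \sum_{i=1}^{d} (BE)_{ii} \;=\; \sum_{i=1}^{d} \sum_{j=1}^{d} B_{ij} E_{ji},
\]
applying the definition of the $(i,i)$ entry of a product. Next I would write the analogous expansion for the other side,
\[
  \tr(EB) \;=\; \sum_{j=1}^{d} (EB)_{jj} \;=\; \sum_{j=1}^{d} \sum_{i=1}^{d} E_{ji} B_{ij}.
\]
Then the proof concludes by swapping the order of summation (Fubini for finite sums) and using commutativity of multiplication of complex scalars $B_{ij}$ and $E_{ji}$ to identify the two double sums.

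There is essentially no obstacle here: the only subtlety is making sure the shapes match so that both products $BE$ and $EB$ are well defined and square, which is automatic from the hypothesis that $B$ and $E$ are both $d \times d$. (The same argument works more generally when $B$ is $m \times n$ and $E$ is $n \times m$, yielding square products of sizes $m$ and $n$ respectively, with identical trace; but the paper only needs the square case.) No use of earlier lemmas in the excerpt is required.
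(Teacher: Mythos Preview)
Your proposal is correct and matches the paper's approach: the paper does not spell out a proof for this lemma but simply states that it ``can be proven using the definition of trace,'' which is precisely the direct coordinate expansion you give.
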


\begin{lem} \label{lem:tr_ge_zero}
For $A,B\geq 0$, \(\tr(AB)\geq 0\).
\end{lem}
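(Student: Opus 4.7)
The plan is to reduce $\tr(AB) \geq 0$ to a statement about the trace of a single positive semi-definite matrix, using the cyclic property of trace (Lemma \ref{lem:interproduct}) together with the existence of a Hermitian square root for positive semi-definite operators.

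First, I would invoke the spectral theorem for $A \geq 0$: since $A$ is Hermitian with non-negative eigenvalues, it admits a positive semi-definite square root $A^{1/2}$ satisfying $A^{1/2} A^{1/2} = A$ and $(A^{1/2})^\dagger = A^{1/2}$. Next, using Lemma \ref{lem:interproduct} (cyclicity of trace) I would rewrite
\[
\tr(AB) \;=\; \tr(A^{1/2} A^{1/2} B) \;=\; \tr(A^{1/2} B A^{1/2}).
\]
The matrix $M := A^{1/2} B A^{1/2}$ is Hermitian, and for any vector $\ket{v}$ we have $\bra{v} M \ket{v} = \bra{w} B \ket{w}$ where $\ket{w} = A^{1/2}\ket{v}$; since $B \geq 0$, this expectation is non-negative, so $M \geq 0$. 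The trace of a positive semi-definite matrix equals the sum of its (non-negative) eigenvalues, hence $\tr(M) \geq 0$, giving $\tr(AB) \geq 0$.

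An equivalent route, which avoids introducing the square root, is to diagonalize $A = \sum_i \lambda_i \op{v_i}{v_i}$ with $\lambda_i \geq 0$, and then compute
\[
\tr(AB) \;=\; \sum_i \lambda_i \tr(\op{v_i}{v_i} B) \;=\; \sum_i \lambda_i \bra{v_i} B \ket{v_i},
\]
where each $\bra{v_i} B \ket{v_i} \geq 0$ because $B \geq 0$, and each $\lambda_i \geq 0$ by positivity of $A$, so the sum is non-negative. There is essentially no obstacle in this proof; the only subtlety is choosing which of the two standard presentations (square-root reduction versus eigendecomposition) to state, and the choice is purely stylistic.
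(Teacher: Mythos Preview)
Your proposal is correct; both routes (the square-root reduction and the eigendecomposition) are standard and valid. The paper does not actually give a detailed proof of this lemma---it simply remarks that it ``can be proven using the definition of trace, partial trace, and support''---so there is no substantive comparison to make; your argument is more explicit than what the paper provides.
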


\begin{lem} \label{lem:tr_monotonic_in_Loewner_order}
If \( A \le B \), then (i) \( \tr(A\rho) \le \tr(B\rho) \) for any density operator \( \rho \), and in particular \( \tr(A) \le \tr(B) \); (ii) \( P A P^{\dagger} \le P B P^{\dagger} \) for any operator \( P \).
\end{lem}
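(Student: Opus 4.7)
The plan is to reduce the statement to the positivity-of-trace fact already codified in \Cref{lem:tr_ge_zero}. By definition of the Löwner order (recalled in \Cref{sec:preliminaries}), the hypothesis $A \leq B$ is equivalent to $B - A$ being positive semidefinite. Linearity of the trace then gives
\[
\tr(B\rho) - \tr(A\rho) = \tr\bigl((B-A)\rho\bigr),
\]
so it suffices to show that the right-hand side is non-negative.

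For this, I would invoke \Cref{lem:tr_ge_zero} directly: $\rho$ is a density operator and so is positive semidefinite, $B-A$ is positive semidefinite by hypothesis, hence $\tr((B-A)\rho) \geq 0$. Rearranging yields $\tr(A\rho) \leq \tr(B\rho)$, as claimed.

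For the ``in particular'' clause, I would specialize in either of two equivalent ways. The cleanest is to pick the maximally mixed state $\rho = I/d$ (where $d$ is the dimension of the underlying Hilbert space), which is a valid density operator; the main inequality then gives $\tr(A)/d \leq \tr(B)/d$, and multiplying by $d$ yields $\tr(A) \leq \tr(B)$. Alternatively, one can observe that any positive semidefinite matrix has non-negative trace (the trace equals the sum of eigenvalues, all of which are $\geq 0$), and apply this directly to $B-A$.

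There is no real obstacle here: both parts are one-line consequences of the definition of the Löwner order combined with \Cref{lem:tr_ge_zero}. The only thing to be careful about is explicitly justifying the linearity step and reminding the reader that ``density operator'' implies ``positive semidefinite'' (with unit trace), so that the hypotheses of \Cref{lem:tr_ge_zero} are genuinely met.
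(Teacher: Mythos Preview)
Your proof is correct. The paper does not spell out a proof for this lemma, remarking only that it ``can be proven using the definition of trace, partial trace, and support''; your argument via linearity of the trace together with \Cref{lem:tr_ge_zero} is exactly the intended route, and your handling of the ``in particular'' clause is fine.
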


\begin{lem} \label{lem:satisfyprojection}
For a density matrix \( \rho \) and a projection \( P \), we have
\[
\tr(P \rho) = 1 \quad \Leftrightarrow \quad \supp(\rho) \subseteq P \quad \Leftrightarrow \quad \rho \vDash P.
\]
\end{lem}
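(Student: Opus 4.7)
The plan is to prove the two equivalences in sequence, starting from the right. The second equivalence, $\supp(\rho) \subseteq P \iff \rho \vDash P$, holds directly by the Birkhoff--von~Neumann definition of satisfaction given in \Cref{sec:background-linear-algebra}, so nothing needs to be done there beyond citing the definition.

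For the first equivalence, $\tr(P\rho) = 1 \iff \supp(\rho) \subseteq P$, I would handle the two directions separately. The easy direction ($\Leftarrow$) uses the characterization $\supp(\rho) \subseteq P \iff P\rho = \rho$ noted in \Cref{sec:background-linear-algebra}: substituting into $\tr(P\rho)$ gives $\tr(\rho) = 1$, since $\rho$ is a density matrix.

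For the harder direction ($\Rightarrow$), I would pass to a spectral decomposition $\rho = \sum_i p_i \op{\psi_i}{\psi_i}$ with $p_i > 0$ and $\sum_i p_i = 1$. Then
\[
\tr(P\rho) = \sum_i p_i \braket{\psi_i | P | \psi_i},
\]
and since $P$ is a projection we have $0 \leq \braket{\psi_i|P|\psi_i} = \|P\ket{\psi_i}\|^2 \leq \|\ket{\psi_i}\|^2 = 1$. Hence $\tr(P\rho) = 1$ forces $\braket{\psi_i|P|\psi_i} = 1$ for every $i$ with $p_i>0$. Using the orthogonal decomposition $\ket{\psi_i} = P\ket{\psi_i} + (I-P)\ket{\psi_i}$ (where the two summands are orthogonal because $P$ is an orthogonal projection), this yields $\|(I-P)\ket{\psi_i}\|^2 = 0$, so each $\ket{\psi_i}$ lies in the subspace $S_P$. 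Since $\supp(\rho)$ is the span of the $\ket{\psi_i}$ with $p_i > 0$, we conclude $\supp(\rho) \subseteq P$.

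I do not expect any real obstacle: the only subtle step is recognizing that an average of numbers in $[0,1]$ can equal $1$ only if each number equals $1$, which is what turns the trace hypothesis into a pointwise containment in the subspace $S_P$. Everything else is bookkeeping via the spectral decomposition together with basic facts about orthogonal projections already quoted in \Cref{sec:background-linear-algebra}.
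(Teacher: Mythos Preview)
Your proposal is correct. The paper does not give a detailed proof of this lemma; it merely remarks that it ``can be proven using the definition of trace, partial trace, and support,'' and your argument via the spectral decomposition of $\rho$ is exactly the standard way to cash that out.
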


\textbf{\textit{Qualitative Predicates for Local Reasoning: Quantum Abstract Interpretation.}}
%
%
%
The qualitative predicates used in QAI express logical properties of quantum states, e.g., whether a subsystem lies in a subspace. Represented as tuples of local projectors, they support scalable local reasoning, inferring global behavior from partial views.
QAI~\cite{YP21} propagates these predicates through circuits using small-subsystem projectors, avoiding exponential full-state analysis.

This section reviews local projective predicates, formalizes their semantics, and describes their transformation under unitaries. Soundness is established via support-based semantics and partial trace, forming the basis for abstract reasoning
about quantum programs.
``Locality'' means that we work with a tuple of sets $(s_1, \dots, s_m)$, where each $s_i \subseteq [n]$ is a small subset of bounded size.

\begin{defn}[\cite{YP21}]\label{projectivepredicates}
A tuple \( (P_{s_1}, \cdots, P_{s_m}) \) is called a \emph{projective predicate} if each \( P_{s_i} \) is a projection, i.e., \( P_{s_i}^2 = P_{s_i} \). We use \( \mathcal{P} \) (or \( \mathcal{Q}, \mathcal{R} \)) to denote projective predicates. In particular, we write \( \mathcal{I} := (I_{s_1}, \cdots, I_{s_m}) \) to represent the identity predicate.
\end{defn}
\textbf{Remark.} Projective predicates are also referred to as \emph{abstract states}.

\begin{defn}[\cite{YP21}]\label{projectivepredicates-concrete}
A state $\rho$ satisfies a projective predicate $\mathcal{P} = (P_{s_1}, \cdots, P_{s_m})$, denoted by $\rho \vDash^{QAI} \mathcal{P}$, if for all $1 \le i \le m$, $P_{s_i}\rho_{s_i} = \rho_{s_i}$, i.e., $\rho_{s_i} \vDash P_{s_i}$. Equivalently, $\rho \vDash \gamma(\mathcal{P})$, where
\[
  \gamma(\mathcal{P}) := \bigcap_i P_{s_i} \otimes I_{[n] \setminus s_i}.
\]
\end{defn}

Given quantum circuit \( \mathbf{C} \) and state \( \rho \vDash \mathcal{P} \), QAI \cite{YP21} constructs a predicate \( \mathcal{Q} \) such that the post-state \( \sem{\mathbf{C}}(\rho) \) satisfies \( \mathcal{Q} \), denoted by $\vDash^{QAI} \{\mathcal{P}\}{ \mathbf{C} }\{\mathcal{Q}\}$.
The idea---encapsulated in the following theorem---is to perform partial concretization rather than complete concretization.

\begin{thm}[\cite{YP21}]\label{QAI}
Let \( U_F \) be a unitary gate applied to the qubit set \( s(F) \), and let \( \mathcal{P} = (P_{s_1}, \cdots, P_{s_m}) \) be a projective predicate. For each \( s_i \), define
\[
\begin{array}{@{\hspace{0ex}}l@{\hspace{2.75ex}}c@{\hspace{2.75ex}}r@{\hspace{0ex}}}
  R_i = \bigcap_{\substack{s_j \subseteq s_i \cup s(F)}} P_{s_j} \otimes I_{s_i \cup s(F) \setminus s_j},
  &
  Q_{s_i} = \supp\left( \tr_{s_i \cup s(F) \setminus s_i} \left( U_F R_i U_F^\dag \right) \right),
  &
   U^\sharp(\mathcal{P}) = (Q_{s_1}, \cdots, Q_{s_m}). 
\end{array}
\]
Then
$\qquad
\rho \vDash^{QAI} \mathcal{P} \quad \Rightarrow \quad
U_F \rho U_F^\dag \vDash^{QAI} U^\sharp(\mathcal{P}).
$
\end{thm}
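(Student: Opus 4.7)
The goal, fixing $\rho$ with $\rho \vDash^{QAI} \mathcal{P}$ and writing $\rho' := U_F \rho U_F^\dag$, is to prove $\rho'_{s_i} \vDash Q_{s_i}$ for each $i$, i.e., $\supp(\rho'_{s_i}) \subseteq Q_{s_i}$. The plan is to introduce $T := s_i \cup s(F)$ and argue at that intermediate scale in three steps that mirror the construction of $Q_{s_i}$: (a) lift the marginal conditions $\rho_{s_j} \vDash P_{s_j}$ (for $s_j \subseteq T$) to $\rho_T \vDash R_i$; (b) transport via conjugation by $U_F$, interpreted as acting on $T$ by identity on $T \setminus s(F)$; (c) trace out $T \setminus s_i$. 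The entire argument then stays within at most $|s_i|+|s(F)|$ qubits, which is why the resulting abstract transformer is scalable.

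Step (a) is the crucial lifting step. I would prove, for each $s_j \subseteq T$, that $\rho_{s_j} \vDash P_{s_j}$ implies $\rho_T \vDash P_{s_j} \otimes I_{T \setminus s_j}$. The assumption gives $P_{s_j}\rho_{s_j} = \rho_{s_j}$, so by \Cref{rdm}, $\tr\bigl((P_{s_j}\otimes I_{T\setminus s_j})\,\rho_T\bigr) = \tr(P_{s_j}\rho_{s_j}) = \tr(\rho_{s_j}) = \tr(\rho_T)$; hence $\tr\bigl((I - P_{s_j}\otimes I_{T\setminus s_j})\,\rho_T\bigr) = 0$. Since $I - P_{s_j}\otimes I_{T\setminus s_j}\geq 0$ and $\rho_T\geq 0$, this forces $(P_{s_j}\otimes I_{T\setminus s_j})\rho_T = \rho_T$. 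Intersecting over all $s_j\subseteq T$ gives $\supp(\rho_T) \subseteq R_i$.

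For steps (b) and (c), because $U_F$ acts trivially on $[n]\setminus s(F)\supseteq [n]\setminus T$, the partial trace over $[n]\setminus T$ commutes with conjugation by $U_F$ and yields $(\rho')_T = U_F\,\rho_T\,U_F^\dag$. Unitary conjugation preserves support, so $\supp((\rho')_T)\subseteq U_F R_i U_F^\dag$. Finally, partial trace is support-monotone on positive operators: if $\supp(A)\subseteq P$ then $A \leq \|A\|_\infty P$, and $\tr_{T\setminus s_i}$, being a positive map, preserves the Löwner order, so $\supp(\tr_{T\setminus s_i}A)\subseteq \supp(\tr_{T\setminus s_i}P)$. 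Applying this with $A = (\rho')_T$ and $P = U_F R_i U_F^\dag$ produces $\supp(\rho'_{s_i})\subseteq \supp\bigl(\tr_{T\setminus s_i}(U_F R_i U_F^\dag)\bigr) = Q_{s_i}$, as required.

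The principal obstacle is genuinely the lifting in step (a): local projective constraints on each individual marginal must be shown to confine the joint state $\rho_T$ to the projective intersection $R_i$. The trace argument is short, but it is precisely the point at which the locality of the $P_{s_j}$ is cashed in for a usable algebraic statement at the intermediate scale $|T| = O(k)$; the remaining pieces are essentially bookkeeping about how partial trace commutes with the action of a local unitary.
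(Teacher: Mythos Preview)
Your argument is correct. The paper does not actually prove this theorem: it is stated with the citation \cite{YP21} and no proof is given in the body or appendices, so there is nothing to compare against in the paper itself. Your three-step decomposition (lift the local projective constraints to the intermediate subsystem $T=s_i\cup s(F)$, conjugate by $U_F$ acting on $T$, then trace down to $s_i$) is exactly the natural proof and is, in essence, the argument given in the original QAI paper \cite{YP21}. The only minor remark is that your step~(c) inequality $A\le\|A\|_\infty P$ (for $A\ge 0$ with $\supp(A)\subseteq P$) is precisely the content of the paper's ``support-monotonicity under partial trace'' lemma (stated in commented-out form in the source as \textit{Lemma~2.4}, $\supp(\tr_s A)\subseteq\supp(\tr_s P)$), so you are reproving that auxiliary fact inline rather than invoking it; either way the logic is sound.
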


\begin{thm}[\cite{YP21}]\label{assertion}
Let \( P = \mathrm{span}\{ \ket{a_1 a_2 \cdots a_{n}}, \ket{b_1 b_2 \cdots b_{n}} \} \),  
where the product states \( \ket{a_i} \) and \( \ket{b_i} \) are not proportional for every \( i \in [n] \).  
Then $P = \gamma(\mathcal{P})$,
where \( \mathcal{P} = (P_{1,2}, \ldots, P_{n-1,n}) \) and each  
\(P_{i,i+1} = \mathrm{span}\{ \ket{a_i a_{i+1}}, \ket{b_i b_{i+1}} \}.
\)
\end{thm}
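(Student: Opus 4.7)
The plan is to prove both inclusions separately. The inclusion $P \subseteq \gamma(\mathcal{P})$ is the easy half: both generators of $P$ manifestly sit in each $P_{i,i+1} \otimes I_{[n]\setminus\{i,i+1\}}$, because $\ket{a_i a_{i+1}} \in P_{i,i+1}$ and $\ket{b_i b_{i+1}} \in P_{i,i+1}$. Since $P_{i,i+1} \otimes I_{[n]\setminus\{i,i+1\}}$ is a subspace, it contains every linear combination of these two generators, so $P \subseteq \bigcap_i P_{i,i+1}\otimes I_{[n]\setminus\{i,i+1\}} = \gamma(\mathcal{P})$.

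For the harder inclusion $\gamma(\mathcal{P}) \subseteq P$, the idea is to perform a local change of basis that reduces the claim to a trivial one in the computational basis. Since $\ket{a_i}$ and $\ket{b_i}$ are not parallel, they form a basis of $\mathbb{C}^2$, so there is a unique invertible linear map $T_i : \mathbb{C}^2 \to \mathbb{C}^2$ with $T_i \ket{0} = \ket{a_i}$ and $T_i \ket{1} = \ket{b_i}$. Let $T = T_1 \otimes \cdots \otimes T_n$; this is invertible on $\mathbb{C}^{2^n}$ and maps subspaces to subspaces of the same dimension.

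The key computation is that $T$ transports the entire setup back to the computational basis. Directly, $T(\mathrm{span}\{\ket{0^n},\ket{1^n}\}) = P$. Because $T$ factors as a tensor product, for any subspace $V \subseteq \mathbb{C}^4$ viewed on qubits $(i,i+1)$ we have $T\bigl(V \otimes \mathbb{C}^{2^{n-2}}\bigr) = (T_i \otimes T_{i+1})(V) \otimes \mathbb{C}^{2^{n-2}}$, and hence $T\bigl(\mathrm{span}\{\ket{00},\ket{11}\}_{i,i+1} \otimes I_{[n]\setminus\{i,i+1\}}\bigr) = P_{i,i+1} \otimes I_{[n]\setminus\{i,i+1\}}$. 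Since $T$ is a bijection, intersections commute with $T^{-1}$, so proving $\gamma(\mathcal{P}) \subseteq P$ reduces to the computational-basis statement
\[
  \bigcap_{i=1}^{n-1} \Bigl(\mathrm{span}\{\ket{00},\ket{11}\}_{i,i+1} \otimes I_{[n]\setminus\{i,i+1\}}\Bigr) \;=\; \mathrm{span}\{\ket{0^n},\ket{1^n}\}.
\]

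This reduced claim is straightforward: the projector onto $\mathrm{span}\{\ket{00},\ket{11}\}$ is $\op{00}{00} + \op{11}{11}$, which is diagonal in the computational basis and fixes exactly the basis states with $x_i = x_{i+1}$. Expanding an arbitrary $\ket{\psi} = \sum_x c_x \ket{x}$, membership in the $i$-th constraint forces $c_x = 0$ whenever $x_i \neq x_{i+1}$, and intersecting over all $i$ leaves only $c_{0^n}$ and $c_{1^n}$ unconstrained. Applying $T$ then yields $\gamma(\mathcal{P}) = P$.

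The main obstacle I anticipate is purely one of care rather than depth: verifying that $T$, which is invertible but not in general unitary (the $\ket{a_i},\ket{b_i}$ need not be orthonormal), correctly transports the constraint subspaces. This is handled by noting that subspaces are preserved under any invertible linear map and that the tensor-product structure $T = \bigotimes_i T_i$ ensures $T$ acts ``locally'' on each pair $(i,i+1)$ with the identity-on-the-rest factor mapping to itself bijectively. Once this bookkeeping is in place, the problem collapses to the elementary diagonal argument above.
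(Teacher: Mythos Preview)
Your argument is correct. The easy inclusion $P \subseteq \gamma(\mathcal{P})$ is immediate, and your reduction of the reverse inclusion via the invertible (not necessarily unitary) change of basis $T = \bigotimes_i T_i$ with $T_i\ket{0}=\ket{a_i}$, $T_i\ket{1}=\ket{b_i}$ is sound: invertible linear maps preserve subspaces and intersections, and the tensor structure of $T$ guarantees that each constraint $P_{i,i+1}\otimes I_{[n]\setminus\{i,i+1\}}$ is the image of the corresponding computational-basis constraint. The residual claim $\bigcap_{i}\bigl(\mathrm{span}\{\ket{00},\ket{11}\}_{i,i+1}\otimes I\bigr)=\mathrm{span}\{\ket{0^n},\ket{1^n}\}$ is verified exactly as you describe, by expanding in the computational basis and reading off the constraint $x_i=x_{i+1}$ for each $i$.

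As for comparison with the paper: this theorem is quoted from \cite{YP21} and the present paper does not supply its own proof, so there is no in-paper argument to compare against. Your change-of-basis reduction is a clean and self-contained route; the only point worth flagging explicitly (which you do address) is that the $T_i$ are merely invertible, so one must work at the level of subspaces rather than projectors throughout the transport step.
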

\section{Correctness Formulas and the Logical System \logic}
\label{Se:CorrectnessFormulasAndTheLogicalSystem}

In this section, we introduce a class of predicates that can be used for quantitative local reasoning (\Cref{Se:predicates}).
We then sketch a strawman approach that incorporates these predicates in a logic similar to QHL (\Cref{sec:FirstAttemptAtDefiningJudgments}).
Via a simple example, we demonstrate that this strawman approach can lead to a significant loss of precision in reasoning.
To address this limitation, we integrate the QAI technique into the framework, leading to the formal definition of the judgments used in \logic (\Cref{sec:QuantitativeJudgmentsAndValidity}),
a theorem about how \logic judgments relate to judgments in QAI (\Cref{Se:Reduction}),
a systematic presentation of the inference rules of \logic (\Cref{Se:LogicalSystemWithSoundness}), and an approximation strategy that addresses how to work with the one inference rule of \logic that poses a challenge to scalability (\Cref{methods}).

\subsection{Quantitative Local Reasoning via Generalized Predicates}\label{Se:predicates}\label{Se:correctness}

Reasoning about full quantum states quickly becomes infeasible due to the exponential growth of the state space with the number of qubits.
To address this issue, we want an approach akin to QAI~\cite{YP21}, which uses tuples of projections as predicates that capture properties of small subsystems.
However, QAI itself is unsatisfactory because QAI can only perform \emph{qualitative} local reasoning.

To enable \emph{quantitative} local reasoning, we focus on \emph{reduced density matrices}, which capture the behavior of local subsystems.
Inspired by Quantum Hoare Logic (QHL)—where a global positive semidefinite matrix $A$ serves as a \emph{predicate} describing properties of a state $\rho$ via the expectation value $\tr(A\rho)$—we introduce the notion of \emph{local observables} to reason about subsystem states $\rho_s$.

\begin{defn}\label{De:LocalObservable}
A \emph{local observable} over a qubit set $s\subseteq [n]$ is a positive semidefinite operator $A_s$ satisfying $0 \leq A_s \leq I_s$, acting nontrivially only on the qubits in $s$. 
\end{defn}

According to Lemma~\ref{rdm}, $\tr(A_s \rho_s) = \tr(A \rho)$ with $A := A_s \otimes I_{[n]\setminus s}$. 
Local observables thus form a natural subclass of predicates in QHL, and we use them to monitor reduced density matrices via the inner product $\tr(\cdot)$, in direct analogy with QHL;
here, ``monitoring $\rho_s$'' means computing the expectation value of local observable $A_s$ with respect to $\rho_s$---i.e., computing $\tr(A_s \rho_s)$.

To capture richer information, we track not a single $\rho_s$ but a tuple of reduced density matrices $(\rho_{s_1}, \dots, \rho_{s_m})$, where each $s_i \subseteq [n]$ is a small subset of bounded size
(i.e., in the same spirit as QAI).
This representation (i) provides significantly deeper insight into the system’s local structure, and (ii) remains tractable, because it requires only \emph{linear} resources, i.e., scales linearly with $m$.

The idea is to define a quantum predicate for each \( s_i \), represented by an observable \( A_{s_i} \), focusing solely on the state of the quantum registers in \( s_i \).

\begin{defn}\label{predicates}
A \emph{predicate} over an $n$-qubit space and $S=(s_1,\dots,s_m)$ is a tuple of local observables $(A_{s_1},\dots,A_{s_m})$. with $0 \le A_{s_i} \le I_{s_i}$, each acting non-trivially only on $s_i$. Predicates are denoted by $\mathscr{A}$ (or $\mathscr{B},\mathscr{D}$);
$m$ is the predicate's \emph{size}.
The domain of $(A_{s_1}, \dots, A_{s_m})$, denoted by $\mathrm{dom}(A_{s_1}, \dots, A_{s_m})$, is $(s_1, \dots, s_m)$.
\end{defn}

Each local observable acts nontrivally on just a few qubits, and captures the expectation value of a measurable quantity, such as the success probability, fidelity, or entanglement,
providing a direct link between semantics and operational outcomes. By tracing their evolution under unitaries and partial traces, we obtain a \emph{scalable, compositional} framework for quantum program verification, supporting precise analysis without reconstructing the full state.

These generalized predicates track quantitative aspects of quantum programs---such as probabilities and expectation values---by monitoring reduced density matrices over small, constant-size subsystems. This abstraction underpins scalable, compositional reasoning about quantum computations via low-dimensional summaries.

\begin{defn}\label{matrix}
For a predicate $\mathscr{A} = (A_{s_1}, \cdots, A_{s_m})$, its matrix representation $M_{\mathscr{A}}$ is defined as:
\begin{small}
\begin{equation}\label{eq:MatrixRepresentation}
M_{\mathscr{A}} = \sum_{i=1}^m A_{s_i} \otimes I_{[n] \setminus s_i}.
\end{equation}
\end{small}
\end{defn}
This definition induces a function $\tr(M_{\mathscr{A}}\rho)$ on a state $\rho$. By Lemma~\ref{rdm}, the value of $\tr(M_{\mathscr{A}}\rho)$ depends only on the reduced states $\rho_{s_i}$ and is equal to the sum of the expectations obtained by measuring $A_{s_i}$ on each $\rho_{s_i}$:
\begin{align}\label{keyobservation}
\tr(M_{\mathscr{A}}\rho) 
= \tr \Big( \sum_{i=1}^m (A_{s_i}\otimes I_{[n]\setminus s_i})\rho \Big) 
= \sum_{i=1}^m \tr[(A_{s_i}\otimes I_{[n]\setminus s_i})\rho] 
= \sum_{i=1}^m \tr (A_{s_i}\rho_{s_i}).
\end{align}
Moreover, matrix  $M_{\mathscr{A}}$ in \Cref{eq:MatrixRepresentation} implicitly specifies a set of quantum states in terms of how their expectations of local observables relate to a given threshold $r$:
$\{ \rho \mid \tr(M_{\mathscr{A}}\rho) \geq r \}$, for $r\geq 0$.

To reason about program behavior, we additionally require a notion of predicate transformation that captures how such predicates evolve under quantum operations. 
In \logic{}, this notion is realized via linear predicate transformers, which map pre-state predicates to post-state predicates. 
Operationally, these transformers describe how the local observables—and hence the corresponding linear functions—are updated by a given circuit element.

This separation between representation (predicates) and transformation (predicate transformers) enables scalable, compositional quantitative reasoning.

We are also interested in the order relation between predicates.

\begin{defn}\label{PredicateOrder}
Given predicates $(A_{s_1}, \cdots, A_{s_m})$ and $(A_{s_1}', \cdots, A_{s_m}')$ over the same domains, define
\[
(A_{s_1}, \cdots, A_{s_m}) \sqsubseteq (A_{s_1}', \cdots, A_{s_m}') \quad \text{if and only if} \quad A_{s_i} \leq A_{s_i}' \text{ for all } i.
\]
\end{defn}

\begin{lem}\label{mono}
The matrix representation of predicates is monotonic with respect to this ordering---i.e.,
\[
\mathscr{A} \sqsubseteq \mathscr{B} \quad \Rightarrow \quad M_{\mathscr{A}} \leq M_{\mathscr{B}}.
\]
\end{lem}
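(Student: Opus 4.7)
The plan is to unpack the definitions of $M_\mathscr{A}$ and $M_\mathscr{B}$ and work directly with their difference. Writing $\mathscr{A} = (A_{s_1},\dots,A_{s_m})$ and $\mathscr{B} = (B_{s_1},\dots,B_{s_m})$, the hypothesis $\mathscr{A} \sqsubseteq \mathscr{B}$ unfolds, by \Cref{predicates}, to $B_{s_i} - A_{s_i} \geq 0$ in the Löwner order for each $i \in [m]$. The goal, $M_{\mathscr{A}} \leq M_{\mathscr{B}}$, is equivalent to showing $M_{\mathscr{B}} - M_{\mathscr{A}} \geq 0$.

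First, I would use linearity of the tensor product together with the defining expression from \Cref{matrix} to rewrite
\[
M_{\mathscr{B}} - M_{\mathscr{A}} \;=\; \sum_{i=1}^m \bigl(B_{s_i} - A_{s_i}\bigr) \otimes I_{[n]\setminus s_i}.
\]
Next, I would argue that each summand is positive semidefinite. For a single term, $B_{s_i} - A_{s_i} \geq 0$ by hypothesis, and the Kronecker product of two positive semidefinite matrices is positive semidefinite (the eigenvalues of $C \otimes I$ are exactly the eigenvalues of $C$, each repeated with appropriate multiplicity, so nonnegativity is preserved). Hence $(B_{s_i} - A_{s_i}) \otimes I_{[n]\setminus s_i} \geq 0$ for every $i$.

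Finally, since the set of positive semidefinite matrices is closed under addition, the sum of these $m$ positive semidefinite matrices is itself positive semidefinite. This gives $M_{\mathscr{B}} - M_{\mathscr{A}} \geq 0$, which is precisely $M_{\mathscr{A}} \leq M_{\mathscr{B}}$, completing the proof.

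There is no real obstacle here: the lemma is essentially a statement that the map $\mathscr{A} \mapsto M_{\mathscr{A}}$ is a linear combination (with fixed positive coefficients, namely identity expansions) of the component observables, and such linear maps preserve the Löwner order componentwise. The only subtlety worth being explicit about is the tensor-with-identity step, which is where positivity on a local subsystem gets lifted to positivity on the full $n$-qubit space.
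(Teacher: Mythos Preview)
Your proof is correct. The paper does not give an explicit proof of this lemma---it is stated without proof, presumably because it follows immediately from the definitions---and your argument (linearity of the difference, positivity preserved under tensoring with the identity, closure of the PSD cone under sums) is exactly the natural way to fill in the details.
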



\subsection{A First Attempt at Defining Judgments}
\label{sec:FirstAttemptAtDefiningJudgments}

This section presents an initial attempt for \emph{quantitative local reasoning} in quantum programs using tuples of local observables.

In QHL~\cite{Ying11}, the unitary rule gives the \emph{weakest precondition} for a postcondition:
$
  \{ U_\mathbf{C} ^\dag B U_\mathbf{C}  \}\  \mathbf{C} \ \{ B \}.
$
To ensure that postcondition $B$ is efficiently representable and physically relevant, we restrict $B$ to the kind of predicate introduced in \Cref{matrix}:
we consider a tuple of local observables, where locality indicates that each observable acts nontrivially on only a constant number of qubits.
Let $S = (s_1, \dots, s_m)$, with $s_i \subseteq [n]$, denote the tuples of local qubits that are of interest, and let $\mathscr{B} = (B_{s_1}, \dots, B_{s_m})$ with \( 0 \leq B_{s_i} \leq I_{s_i} \); then
$
M_{\mathscr{B}} = \sum_i B_{s_i} \otimes I_{[n] \setminus s_i}.
$
The question now is ``What is the analogue of $U_\mathbf{C} ^\dag B U_\mathbf{C}$?''
\begin{defn}\label{LHP}
Given a fixed domain $S = (s_1, \dots, s_m)$ and a quantum program 
$\mathbf{C} = \lambda \rho.\, U \rho U^\dag$ on a density matrix, 
we say that $\mathscr{A} = (A_{s_1}, \dots, A_{s_m})$ is a \emph{local precondition} 
of $\mathscr{B} = (B_{s_1}, \dots, B_{s_m})$ if
\[
  \models \{M_{\mathscr{A}}\}\, \mathbf{C} \, \{M_{\mathscr{B}}\},
\]
where $M_{\mathscr{A}} = \sum_i A_{s_i} \otimes I_{[n]\setminus s_i}$ and 
$M_{\mathscr{B}} = \sum_i B_{s_i} \otimes I_{[n]\setminus s_i}$, 
and the judgment means
\[
  \forall \rho, \quad \tr(M_{\mathscr{A}} \rho) \le \tr(M_{\mathscr{B}} \sem{\mathbf{C}}(\rho)).
\]
\end{defn}

We could use a QHL-like proof rule for a unitary, namely 
\begin{equation}
  \inferrule
    { M_{\mathscr{A}} \leq U^\dag M_{\mathscr{B}} U  }
    { \{ M_{\mathscr{A}} \}\ \mathbf{C}\ \{ M_{\mathscr{B}} \} }
  \label{Eq:LocalBackwardReasoning}
\end{equation}
To see that $M_{\mathscr{A}} \leq U^\dag M_{\mathscr{B}} U$ leads to a valid QHL judgement---without requiring that the observables are bounded above by the identity---we observe that
\begin{equation}
  \label{Eq:DeducedTraceInequality}
  \tr\left(M_{\mathscr{A}} \rho\right) 
  \leq \tr\left(U^\dag M_{\mathscr{B}} U \rho\right) 
  = \tr\left(M_{\mathscr{B}} U \rho U^\dag\right) 
  = \tr\left(M_{\mathscr{B}}\, \sem{\mathbf{C}}(\rho)\right),
\end{equation}
where the inequality follows from \Cref{lem:tr_monotonic_in_Loewner_order} applied to the premise of Rule~\labelcref{Eq:LocalBackwardReasoning}, and the first equality uses \Cref{lem:interproduct}.
\Cref{LHP} tells us that the conclusion of Rule~\labelcref{Eq:LocalBackwardReasoning} holds.
 
This definition naturally gives rise to a correctness judgment of the form \( \{ \mathscr{A} \}\ \mathbf{C}\ \{ \mathscr{B} \} \) for quantum circuits.
In addition to enabling \emph{backward reasoning}, Rule~\labelcref{Eq:LocalBackwardReasoning} with \Cref{LHP} also supports \emph{forward reasoning}
by changing the premise of Rule~\labelcref{Eq:LocalBackwardReasoning} to
``$U M_{\mathscr{A}} U^\dag \leq M_{\mathscr{B}}$.''

Unfortunately, it is not possible, in general, to use 
either version of Rule~\labelcref{Eq:LocalBackwardReasoning}
\emph{algorithmically} to compute the weakest precondition and the strongest postcondition, respectively.
The issue is one of \emph{expressibility}:
can the answer be decomposed according to the chosen scheme $S = (s_1, \ldots, s_m)$?
In general, this \emph{structural constraint} of our candidate logic presents an obstacle.
For instance,
even when \( U_F \) is a two-qubit unitary, the transformed observable
$
  U_F^\dag \left( \sum_i B_{s_i} \otimes I_{[n] \setminus s_i} \right) U_F
$
may not admit a decomposition of the form \( \sum_i A_{s_i} \otimes I_{[n] \setminus s_i} \) for any choice of local observables \( A_{s_i} \). That is, the following equality may have \emph{no solution} for local predicates \( \mathscr{A} \), for some given \( \mathscr{B} \).
\begin{align} 
\label{equality}
M_{\mathscr{A}} = U_F^\dag M_{\mathscr{B}} U_F
\end{align}

\begin{example}\label{simpleexample}
To understand this limitation concretely, consider a circuit in which a CNOT gate is applied to the input state $\ket{00}$, as shown in Figure~\ref{fig:CNOT}.
The output state is $\ket{00}$.
Let us reason about the circuit \(\mathbf{C}\) using \Cref{equality}, where we choose \((s_1, s_2) = (\{q_1\}, \{q_2\})\).
Let the postcondition be $\mathscr{B} = (B_{s_1}, B_{s_2}) = (\op{0}{0},\op{0}{0})$, which has the matrix representation \(M_{\mathscr{B}} = \op{0}{0} \otimes I + I \otimes \op{0}{0}\).
\begin{mdframed}[innertopmargin=3pt, innerbottommargin=3pt]
  \textit{\textbf{Observation}: The following calculation will demonstrate that, in local reasoning, it is not always possible to express the weakest precondition.}
\end{mdframed}

\noindent
We will show that there is no \( A_1, A_2 \) such that
\vspace{-2pt}
  \begin{align*}
    U_{\mathbf{C}}^\dag \left( \sum_i B_{s_i} \otimes I_{[n] \setminus s_i} \right) U_{\mathbf{C}} = \op{0}{0}\otimes I + \op{0}{0}\otimes \op{0}{0} + \op{1}{1}\otimes \op{1}{1} = A_1 \otimes I + I \otimes A_2
  \end{align*}

We see that the left-hand side is orthogonal to $\op{1}{1}\otimes \op{0}{0}$. Therefore, we know that

\noindent
\noindent
\begin{minipage}[b]{0.68\textwidth}
\vspace{0pt}
\[
\begin{aligned}
& \tr[(\op{1}{1}\otimes \op{0}{0})(A_1\otimes I+I\otimes A_2)] = 0 \\
\Rightarrow\;& \tr[(\op{1}{1}A_1)\otimes \op{0}{0}] + \tr[\op{1}{1}\otimes(\op{0}{0}A_2)] = 0 \\
\Rightarrow\;& \tr(\op{1}{1}A_1)=0 \quad \text{and} \quad \tr(\op{0}{0}A_2)=0 \ \ \ \qquad (\textrm{Lemma} \ \ref{lem:tr_ge_zero})\\
\Rightarrow\;& A_1=\lambda_1 \op{0}{0}, \quad A_2=\lambda_2 \op{1}{1} \\
\Rightarrow\;& A_1 \otimes I + I \otimes A_2
\end{aligned}
\]
\end{minipage}\hfill
\begin{minipage}[t]{0.28\textwidth}
\vspace{-2.5em}  
\centering
\includegraphics[width=\linewidth]{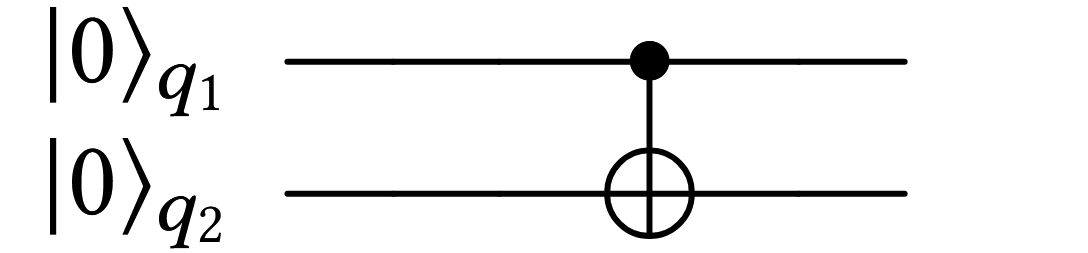}
\captionof{figure}{CNOT circuit \(\mathbf{C}\).}
\label{fig:CNOT}
\end{minipage}
Moreover, we can show that the QHL-like proof rule \labelcref{Eq:LocalBackwardReasoning} is sometimes unable to prove a
simple assertion.
For instance, for the input $\ket{00}$, the $\textit{CNOT}$ gate has no effect;
therefore, the output state is also $\ket{00}$.
Consequently,
the following assertion holds:
\begin{align}
  \label{Eq:SimpleAssertion}
  \{(\op{0}{0},\op{0}{0})\}\ \mathbf{C}\ \{(\op{0}{0},\op{0}{0})\}.
\end{align}
To establish that Rule~\labelcref{Eq:LocalBackwardReasoning} cannot prove that Assertion~\labelcref{Eq:SimpleAssertion} holds,
we compute
\begin{align*}
                      &\{(\op{0}{0},\op{0}{0})\}\ \mathbf{C}\ \{(\op{0}{0},\op{0}{0})\}\\
  \Longleftrightarrow & \mathit{CNOT}(\op{0}{0}\otimes I+I\otimes\op{0}{0})\mathit{CNOT}^{\dag}\leq \op{0}{0}\otimes I+I\otimes\op{0}{0}\\
  \Longleftrightarrow &\op{0}{0}\otimes I+\op{00}{00}+\op{11}{11}\leq \op{0}{0}\otimes I+I\otimes\op{0}{0}\\
  \Longleftrightarrow &\op{00}{00}+\op{11}{11}\leq I\otimes\op{0}{0}
\end{align*}
Because the final inequality fails, in our strawman logic based on tuples of local observables and Rule~\labelcref{Eq:LocalBackwardReasoning}, Assertion~\labelcref{Eq:SimpleAssertion} is unprovable;
i.e.,
\begin{align*}
  \not\vdash\{(\op{0}{0},\op{0}{0})\}\ \mathbf{C}\ \{(\op{0}{0},\op{0}{0})\}.
\end{align*}

What this example shows is that without additional constraints that restrict the input, the candidate logic lacks precision.
As we show in \Cref{simpleexample_redux} (see \Cref{sec:QuantitativeJudgmentsAndValidity}), by imposing the constraint $\mathcal{P} := (\op{0}{0}, \op{0}{0})$ on the input, \logic can establish the desired property that input state $\ket{00}$ yields output state $\ket{00}$.


\end{example}

\subsection{Quantitative Judgments and Validity}
\label{sec:QuantitativeJudgmentsAndValidity}

Beyond the expressivity limitations noted in \Cref{sec:FirstAttemptAtDefiningJudgments}, the predicates used in \Cref{sec:FirstAttemptAtDefiningJudgments} cannot even capture the singleton state $\ket{0}^{\otimes n}$, revealing a mismatch with our goals: 
\Cref{LHP} demands validity for all states satisfying the precondition, 
whereas quantum programs typically concern the specific input $\ket{0}^{\otimes n}$. 
To address both issues, we adopt a more expressive logic that still supports quantitative local reasoning, 
leveraging quantum abstract interpretation~\cite{YP21} as a foundation.

\begin{defn}\label{mdef}
A judgment is a triple of the form $\{\mathscr{A} \mid \mathcal{P}\}\ \mathbf{C}\ \{\mathscr{B} \mid \mathcal{Q}\}$ for program $\mathbf{C}$, general predicates $\mathscr{A}$ and $\mathscr{B}$, and projective predicates $\mathcal{P}$ and $\mathcal{Q}$.
\end{defn}

For simplicity, we assume that $\mathscr{A}$, $\mathscr{B}$, $\mathcal{P}$, and $\mathcal{Q}$ share the same domain.

\begin{defn}[Validity] \label{quantitative-validity-total}
The judgment $\{\mathscr{A} \mid \mathcal{P}\}\, \mathbf{C}\,\{\mathscr{B} \mid \mathcal{Q}\}$ in \Cref{mdef} holds if \begin{equation}
\label{eq:judge}
\forall \rho \vDash^{QAI} \mathcal{P},\;
\sem{\mathbf{C}}(\rho)\vDash^{QAI}\mathcal{Q},\;
\text{and}\;
\tr(M_{\mathscr{A}}\rho)\le
\tr(M_{\mathscr{B}}\sem{\mathbf{C}}(\rho)).
\end{equation}
where $M_{\mathscr{A}}$ and $M_{\mathscr{B}}$ are the matrix representations of $\mathscr{A}$ and $\mathscr{B}$, as
defined in
\Cref{matrix}.

Consider the
special case $\mathcal{P}=\mathcal{Q}=\mathcal{I}$ with $\mathcal{I}:=(I_{s_1},\cdots,I_{s_m})$, we will write 
\[
\{\mathscr{A} \mid \mathcal{I}\}\ \mathbf{C}\ \{\mathscr{B} \mid \mathcal{I}\}=\{\mathscr{A}\}\ \mathbf{C}\ \{\mathscr{B}\}
\]
in which case this definition simplifies to \Cref{LHP}.
\end{defn}

As we know from \Cref{keyobservation}, $\tr(M_{\mathscr{A}} \rho) = \sum_{i=1}^m \tr\left( A_{s_i} \rho_{s_i} \right)$.
In other words, ${\tr}(M_{\mathscr{A}}\rho)$ only depends on the reduced density matrices of $\rho$ with respect to  the systems $\set{s_i}$.
Intuitively, this judgment tracks a linear function of the tuple of reduced density matrices to enable quantitative reasoning.
Moreover, $\rho_{s_i}\vDash P_{s_i}$ implies that $\rho_{s_i}=P_{s_i}\rho_{s_i}P_{s_i}$.
Due to \Cref{lem:interproduct}, we have
\begin{align*}
\sum_{i=1}^m\tr( A_{s_i}\rho_{s_i})=\sum_{i=1}^m\tr (A_{s_i}P_{s_i}\rho_{s_i}P_{s_i})=\sum_{i=1}^m\tr (P_{s_i}A_{s_i}P_{s_i}\rho_{s_i}),
\end{align*}
We observe that $\tr(M_{\mathscr{A}}\rho) = \tr(M_{\mathscr{A}'}\rho)$ with $\mathscr{A}' = (P_{s_1}A_{s_1}P_{s_1}, \dots, P_{s_m}A_{s_m}P_{s_m})$. Then 
\[
\{\mathscr{A}' \mid \mathcal{P}\}\,\mathbf{skip}\,\{\mathscr{A} \mid \mathcal{P}\} 
\quad\text{and}\quad 
\{\mathscr{A} \mid \mathcal{P}\}\,\mathbf{skip}\,\{\mathscr{A}' \mid \mathcal{P}\}.
\]
Thus, the qualitative insights from quantum abstract interpretation refine the reasoning process by yielding ``stronger'' postconditions and ``weaker'' preconditions.\footnote{Here, ``stronger'' and ``weaker'' simply refer to the relation
${\tr}(A_{s_i}) \ge {\tr}\big((P_{s_i}P_{s_i})A_{s_i}\big) \ge {\tr}\big(P_{s_i}A_{s_i}P_{s_i}\big)$, as established by \Cref{lem:interproduct}. They do \textbf{not} indicate any ordering relationship between $\mathscr{A}$ and $\mathscr{A}'$.
}

\begin{example}\label{simpleexample_redux}
Consider again the example from \Cref{simpleexample} in which a CNOT gate is applied to the input state $\ket{00}$ (Figure~\ref{fig:CNOT}).
The output state is $\ket{00}$.
Again, we choose \((s_1, s_2) = (\{q_1\}, \{q_2\})\), but now our two-part assertions are able to prove the desired property, now stated as follows:
\begin{align*}
 \vdash\{(\op{0}{0},\op{0}{0})|(\op{0}{0},\op{0}{0})\}\ \mathbf{C}\ \{(\op{0}{0},\op{0}{0})|(\op{0}{0},\op{0}{0})\}.
\end{align*}
\end{example}

\subsection{Reduction}
\label{Se:Reduction}

We now consider the relationship between the correctness $\models^{\rm QAI}$ in QAI and $\models$ in \Cref{quantitative-validity-total}.
\begin{thm}[Reduction Principle]\label{lifting}
Consider the behavior of a quantum circuit with respect to any input state.
For any projective predicates $\mathcal{P}=(P_{s_1},\cdots, P_{s_m})$ and $\mathcal{Q}=(Q_{s_1},\cdots, Q_{s_m})$,
$\mathcal{P}$ and $\mathcal{Q}$ can be regarded as the observables
$\mathscr{P}:=(P_{s_1},\cdots, P_{s_m})$ and $\mathscr{Q}=(Q_{s_1},\cdots, Q_{s_m})$, respectively.
Then we have the property
\[
\text{If $\models\{\mathscr{P}\}\ \mathbf{C}\ \{\mathscr{Q}\}$ in
the sense of
\Cref{quantitative-validity-total}, then $\models^{\rm QAI} \{\mathcal{P}\}\  \mathbf{C}\ \{\mathcal{Q}\}$.}
\]
\end{thm}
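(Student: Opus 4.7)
The plan is to exploit the key observation from equation \eqref{keyobservation} that $\tr(M_{\mathscr{A}}\rho) = \sum_{i=1}^m \tr(A_{s_i}\rho_{s_i})$ and to sandwich the trace inequality between tight bounds, forcing every summand to attain its maximum value. The proof is essentially a ``saturation'' argument: the hypothesis gives a lower bound on the right-hand side, while the fact that each $Q_{s_i}$ is a projection bounded by $I_{s_i}$ gives a matching upper bound.

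First, fix an arbitrary $\rho \vDash^{QAI} \mathcal{P}$. By \Cref{lem:satisfyprojection}, this means $\tr(P_{s_i}\rho_{s_i}) = 1$ for every $i \in \{1,\dots,m\}$. Applying \Cref{keyobservation} to $\mathscr{P}$, I obtain
\[
  \tr(M_{\mathscr{P}}\rho) \;=\; \sum_{i=1}^m \tr(P_{s_i}\rho_{s_i}) \;=\; m.
\]
Next, invoke the hypothesis $\models \{\mathscr{P}\}C\{\mathscr{Q}\}$, which in the trivial-QAI specialization of \Cref{quantitative-validity-total} asserts $\tr(M_{\mathscr{P}}\rho) \leq \tr(M_{\mathscr{Q}}\sem{C}(\rho))$. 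Writing the right-hand side via \Cref{keyobservation} yields
\[
  m \;\leq\; \sum_{i=1}^m \tr\bigl(Q_{s_i}\, \sem{C}(\rho)_{s_i}\bigr).
\]

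Now for the upper bound: since $\sem{C}(\rho)$ is a density operator, each reduced state $\sem{C}(\rho)_{s_i}$ has unit trace, and since $Q_{s_i} \leq I_{s_i}$, \Cref{lem:tr_monotonic_in_Loewner_order} gives $\tr(Q_{s_i}\, \sem{C}(\rho)_{s_i}) \leq 1$ for every $i$. The sum is therefore at most $m$, so combined with the lower bound every summand must equal exactly $1$:
\[
  \tr\bigl(Q_{s_i}\, \sem{C}(\rho)_{s_i}\bigr) \;=\; 1 \quad \text{for all } i.
\]
Applying \Cref{lem:satisfyprojection} once more in the reverse direction, this is equivalent to $\sem{C}(\rho)_{s_i} \vDash Q_{s_i}$ for every $i$, which by \Cref{projectivepredicates-concrete} is precisely the statement $\sem{C}(\rho) \vDash^{QAI} \mathcal{Q}$. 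Since $\rho$ was an arbitrary state satisfying $\mathcal{P}$, this establishes $\models^{QAI}\{\mathcal{P}\}C\{\mathcal{Q}\}$.

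I do not anticipate a serious obstacle here: the proof is essentially bookkeeping that exploits the fact that projections used as observables attain their maximal expectation value exactly on states in the corresponding subspace. The only subtlety worth flagging is that the tuples $\mathcal{P}$ and $\mathcal{Q}$ must be indexed over the \emph{same} collection of subsystems $\{s_1,\dots,s_m\}$ for the ``$m$ versus $m$'' pigeonhole argument to go through; this is already built into the statement of the theorem, so no additional hypothesis is needed. Note that the converse direction fails in general, so the implication is genuinely one-way: a QAI judgment does not by itself enforce any nontrivial quantitative inequality on non-projective observables.
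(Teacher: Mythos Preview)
Your proof is correct and follows essentially the same route as the paper: both compute $\tr(M_{\mathscr{P}}\rho)=m$ from $\rho\vDash^{QAI}\mathcal{P}$, bound $\tr(M_{\mathscr{Q}}\sem{C}(\rho))\le m$ using $Q_{s_i}\le I_{s_i}$, and use the resulting saturation to force $\tr(Q_{s_i}\,\sem{C}(\rho)_{s_i})=1$ for each $i$, then appeal to \Cref{lem:satisfyprojection}. The only cosmetic difference is that the paper writes the reduced state as $\tr_{[n]\setminus s_i}\sem{C}(\rho)$ rather than $\sem{C}(\rho)_{s_i}$.
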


We defer the proof to \Cref{sec:ReductionPrincipleProof}.

\subsection{Logical System with Soundness}
\label{Se:LogicalSystemWithSoundness}

The inference rules for program constructs in \logic are presented in \Cref{fig proof system quanti}. The proof of the following theorem is deferred to \Cref{sec:SoundnessProof}.

\begin{figure}
    \centering
    \resizebox{.8\textwidth}{!}{
    \begin{minipage}{\textwidth}
		\begin{equation*}
        \begin{split}
			&\textsc{Skip}\quad\ \quad\ \frac{}{\{\mathscr{A} \mid \mathcal{P}\}\ \mathbf{Skip}\ \{\mathscr{A} \mid \mathcal{P}\}} \\[0.1cm] 
		&\textsc{Unit-1}\quad\ \frac{  \gamma(\mathcal{P})M_{\mathscr{A}}\gamma(\mathcal{P})\leq \gamma(\mathcal{P})U_F^{\dag}M_{\mathscr{B}}U_F\gamma(\mathcal{P})}{
			\{\mathscr{A} \mid \mathcal{P}\}\ \bar{q}:=U_F\left[\bar{q}\right] \{\mathscr{B} \mid U_F^{\sharp}(\mathcal{P})\}}\ \ \ \ \ \ \\[0.1cm]
		&\textsc{Unit-2}\quad\ \frac{\{\mathscr{A}_i \mid \mathcal{P}_i\}\ \bar{q}:=U_F\left[\bar{q}\right]\{\mathscr{B}_i \mid U_F^{\sharp}(\mathcal{P}_i)\}}{
			\{\oplus_{i}\mathscr{A}_i \mid \mathcal{P}\}\ \bar{q}:=U_F \left[\bar{q}\right]\{\oplus_{i}\mathscr{B}_i \mid \oplus_{i}U_F^{\sharp}(\mathcal{P}_i)\}}\ \ \ \ \ \ \\[0.1cm]
		&\textsc{Seq}\quad\quad\ \
		\frac{\{\mathscr{A} \mid \mathcal{P}\}\ \mathbf{C}_1\ \{\mathscr{D} \mid \mathcal{R}\}\ \ \ \ \ \   \{\mathscr{D} \mid \mathcal{R}\}\ \mathbf{C}_2\ \{\mathscr{B} \mid \mathcal{Q}\}}{\{\mathscr{A} \mid \mathcal{P}\}\ \mathbf{C}_1;\mathbf{C}_2\ \{\mathscr{B} \mid \mathcal{Q}\}}\ \ \ \ \ \ \\[0.1cm]
				& \textsc{Con}\quad\quad\
		\frac{\{\mathscr{A} \mid \mathcal{P}\}\ \mathbf{C}\ \{\mathscr{B} \mid \mathcal{Q}\}, \ \ \mathscr{D}\sqsubseteq\mathscr{A},\ \mathscr{B}\sqsubseteq\mathscr{E},\  \mathcal{R}\sqsubseteq \mathcal{P},\ \mathcal{Q}\sqsubseteq \mathcal{T}}{\{\mathscr{D} \mid \mathcal{R}\}\ \mathbf{C}\ \{\mathscr{E} \mid \mathcal{T}\}}
		\end{split}
        \end{equation*}
    \end{minipage}
    }
	\caption{
        Inference rules for program constructs
        in \logic.
        We can use the proof rules for both forward reasoning or backward reasoning. The \textsc{Skip}, \textsc{Seq}, and \textsc{Con} rules are standard, and operate in conjunction with the ordering on predicates. 
        However, a direct application of the \textsc{Unit-1} rule would break scalability, because it requires computing and applying $\gamma(\mathcal{P})$, i.e., 
        the full concretization of $\mathcal{P}$.
        \textsc{Unit-2} builds on \textsc{Unit-1} to enable scalable reasoning by partitioning the index set $\{1, \dots, m\}$ into disjoint subsets, each corresponding to a group of local predicates to which \textsc{Unit-1} is applied. 
        For tuples of matrices $\mathscr{X}_1, \dots, \mathscr{X}_k$, where $\mathscr{X}_i = (X_{i,1}, \dots, X_{i,m_i})$, we define their concatenation by $\bigoplus_{i=1}^k \mathscr{X}_i := (X_{1,1}, \dots, X_{1,m_1}, \dots, X_{k,1}, \dots, X_{k,m_k})$, which produces a single tuple containing all matrices in order; this notation applies equally to $\mathscr{A}_i$, $\mathscr{B}_i$, and $\mathcal{P}_i$. See \Cref{methods} for details.
    }
	\label{fig proof system quanti}
\end{figure}

	\begin{theorem}[Soundness]
		\label{thm sound QSL}
The proof system in
\Cref{fig proof system quanti} is sound. That is, for quantum program $\mathbf{C}$, $\vdash\{\mathscr{A} \mid \mathcal{P}\}\ \mathbf{C}\ \{\mathscr{B} \mid \mathcal{Q}\}\ {\rm implies}\
    \models\{\mathscr{A} \mid \mathcal{P}\}\ \mathbf{C}\ \{\mathscr{B} \mid \mathcal{Q}\}.$
	\end{theorem}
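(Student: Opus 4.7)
The plan is to prove soundness by structural induction on the derivation of $\vdash\{\mathscr{A}|\mathcal{P}\}\mathbf{C}\{\mathscr{B}|\mathcal{Q}\}$, establishing the result for the general judgment form (of which the stated theorem is the special case $\mathcal{P}=\mathcal{Q}=\mathcal{I}$). Each rule in \Cref{fig proof system quanti} gives rise to one inductive case; I would need to verify, for each, both conjuncts of \Cref{quantitative-validity-total}: the QAI spatial part \labelcref{eqe-judge-partial} and the quantitative trace inequality \labelcref{eqe-judge-inequality}.

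For \textsc{Skip}, both conjuncts hold trivially because $\sem{\mathbf{Skip}} = \mathrm{id}$ and the assertions on both sides coincide. For \textsc{Seq}, I would chain the inductive hypotheses: from $\rho \vDash^{QAI} \mathcal{P}$ get $\sem{\mathbf{C}_1}(\rho) \vDash^{QAI} \mathcal{R}$ and $\tr(M_\mathscr{A}\rho) \leq \tr(M_\mathscr{D}\sem{\mathbf{C}_1}(\rho))$, then apply the second premise to $\sem{\mathbf{C}_1}(\rho)$ and conclude by transitivity. For \textsc{Con}, I would use \Cref{mono} to lift $\mathscr{D} \sqsubseteq \mathscr{A}$ and $\mathscr{B} \sqsubseteq \mathscr{E}$ to $M_\mathscr{D} \leq M_\mathscr{A}$ and $M_\mathscr{B} \leq M_\mathscr{E}$, combine with the inductive hypothesis using \Cref{lem:tr_monotonic_in_Loewner_order}, and handle the projective part via monotonicity of $\gamma$ combined with the QAI-level consequence rule (spelled out through \Cref{lem:satisfyprojection}).

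The main case is \textsc{Unit}, which is also the principal obstacle. The QAI part \labelcref{eqe-judge-partial} follows directly from \Cref{QAI} applied to $U_F$ and $\mathcal{P}$, giving $U_F \rho U_F^\dag \vDash^{QAI} U_F^\sharp(\mathcal{P})$. For the quantitative part, the key insight is that $\rho \vDash^{QAI} \mathcal{P}$ implies $\rho \vDash \gamma(\mathcal{P})$ by \Cref{projectivepredicates-concrete}, hence $\gamma(\mathcal{P})\rho\gamma(\mathcal{P}) = \rho$ using the fact that $\gamma(\mathcal{P})$ is a projection. Then using \Cref{lem:interproduct} in the style already displayed after \Cref{mdef}, I can rewrite
\[
  \tr(M_\mathscr{A}\rho) = \tr\bigl(\gamma(\mathcal{P}) M_\mathscr{A} \gamma(\mathcal{P})\, \rho\bigr)
\]
and, using cyclicity of the trace together with $U_F \rho U_F^\dag = \sem{\mathbf{C}}(\rho)$,
\[
  \tr\bigl(M_\mathscr{B}\,\sem{\mathbf{C}}(\rho)\bigr) = \tr\bigl(U_F^\dag M_\mathscr{B} U_F\, \rho\bigr) = \tr\bigl(\gamma(\mathcal{P}) U_F^\dag M_\mathscr{B} U_F \gamma(\mathcal{P})\, \rho\bigr).
\]
The premise of \textsc{Unit} asserts precisely the Löwner inequality between the two sandwiched operators, so \Cref{lem:tr_monotonic_in_Loewner_order} applied to the positive semidefinite $\rho$ yields \labelcref{eqe-judge-inequality}.

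The hardest step is justifying the compression by $\gamma(\mathcal{P})$ in the \textsc{Unit} case: one must verify carefully that $\gamma(\mathcal{P})\rho = \rho\gamma(\mathcal{P}) = \rho$ follows from $\rho \vDash^{QAI} \mathcal{P}$ (not just $\rho_{s_i} \vDash P_{s_i}$ componentwise), which is exactly the content of the equivalence in \Cref{projectivepredicates-concrete} together with \Cref{lem:satisfyprojection}. Once this is in place, the rule's premise---which allows the user to assume $\gamma(\mathcal{P})$ both on the left and on the right of the observables---becomes exactly the weakest sufficient condition for \labelcref{eqe-judge-inequality}, and the induction closes.
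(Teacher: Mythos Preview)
Your proposal is correct and follows essentially the same approach as the paper's proof: a rule-by-rule verification where \textsc{Skip}, \textsc{Seq}, and \textsc{Con} are handled exactly as you describe, and the \textsc{Unit} case hinges on the same compression identity $\rho = \gamma(\mathcal{P})\rho\gamma(\mathcal{P})$ combined with trace cyclicity (\Cref{lem:interproduct}) to reduce the quantitative inequality to the rule's premise, with the QAI part supplied by \Cref{QAI}.
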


\subsection{\textsc{Unit-2}: Scalable Application of the \textsc{Unit-1} Rule}
\label{methods}

Among the proof rules in Figure~\ref{fig proof system quanti}, only the rule \textsc{Unit-1} poses a significant challenge to scalability. Specifically, the condition
\begin{align}
  \label{target-giant-inequality}
  \gamma(\mathcal{P}) M_{\mathscr{A}} \gamma(\mathcal{P}) \leq \gamma(\mathcal{P}) \sem{\mathbf{C}}^*(M_{\mathscr{B}}) \gamma(\mathcal{P})
\end{align}
requires computing the projector $\gamma(\mathcal{P})$, which becomes intractable for systems with many qubits. That is, given a postcondition $\mathscr{B}$ and a set of projective predicates $\mathcal{P}$, we are not aware of any scalable method for synthesizing a precondition $\mathscr{A}$ 
that satisfies this inequality.


To address this issue, we introduce a \emph{compositional approximation} strategy by replacing the \textsc{Unit-1} rule with \textsc{Unit-2}.
The key idea behind \textsc{Unit-2} is to construct inequalities over high-dimensional systems by composing inequalities over smaller subsystems. These approximations trade some precision for tractability, but preserve overall soundness and enable reasoning about large quantum programs.

We first partition the index set $\{{s_1}, \dots, {s_m}\}$ into disjoint subsets $T_1, \dots, T_k$, and consider the following \textsc{Unit-1}-like inequalities for each $T_i$:
\begin{equation}
  \label{Eq:InequalityOne}
  \forall j. \quad
  \gamma(\mathcal{P}) \left( \sum_{i \in T_j} A_{s_i} \otimes I_{[n] \setminus s_i} \right) \gamma(\mathcal{P})
  \leq
  \gamma(\mathcal{P}) U^\dagger \left( \sum_{i \in T_j} B_{s_i} \otimes I_{[n] \setminus s_i} \right) U \gamma(\mathcal{P}).
\end{equation}
Even if each $T_i$ is small, computing $\gamma(\mathcal{P})$ remains challenging. The only aspect that does not fit the \textsc{Unit-2} rule is the role of $\gamma(\mathcal{P})$, which will be discussed next.

\paragraph{Warm-up: Ignoring \boldmath{$\gamma(\mathcal{P})$}}
As a first step, for the sake of scalability we can use the following version of the \textsc{Unit-1} rule, which omits the occurrences of the projection $\gamma(\mathcal{P})$:
\begin{equation}
  \label{Eq:RuleOnePrime}
  \textsc{Unit-1'}\quad\ \frac{  M_{\mathscr{A}} \leq U_F^{\dag}M_{\mathscr{B}}U_F}{
			\{\mathscr{A} \mid \mathcal{P}\}\ \bar{q}:=U_F\left[\bar{q}\right] \{\mathscr{B} \mid U_F^{\sharp}(\mathcal{P})\}}
\end{equation}
Although rule \labelcref{Eq:RuleOnePrime} is imprecise (see \Cref{sec:FirstAttemptAtDefiningJudgments}), it may still yield useful results, as we will see in \Cref{sec:ReasoningAboutTheGHZCircuit}.
The rule \labelcref{Eq:RuleOnePrime} is sound by the unitary rule of QHL~\cite{Ying11}, and it implies the \textsc{Unit-1} rule via \Cref{lem:tr_monotonic_in_Loewner_order}.

We make use of \textsc{Unit-1'} by structuring a proof into arguments that involve only small matrices, following the pattern on the left-hand side of \Cref{Eq:WarmupInequality}, which, by \Cref{Eq:WarmupInequality}, establishes the premise of the \textsc{Unit-1'} rule.

\begin{equation}
  \label{Eq:WarmupInequality}
  \forall j. \quad
  \sum_{i \in T_j} A_{s_i} \otimes I_{[n] \setminus s_i}
  \leq
  U_F^\dagger \left( \sum_{i \in T_j} B_{s_i} \otimes I_{[n] \setminus s_i} \right) U_F
  \quad \Longrightarrow \quad
  M_{\mathscr{A}} \leq U_F^{\dag}M_{\mathscr{B}}U_F.
\end{equation}
The forward form of this inequality is:
\begin{equation}
  \label{Eq:WarmupInequalityForward}
  \forall j. \quad
  U_F \left( \sum_{i \in T_j} A_{s_i} \otimes I_{[n] \setminus s_i} \right) U_F^\dagger
  \leq
  \sum_{i \in T_j} B_{s_i} \otimes I_{[n] \setminus s_i}.
\end{equation}
Because each inequality involves a small subsystem, constraints can be efficiently discharged using symbolic solvers or semidefinite programming, depending on the structure of $A_{s_i}$ and $B_{s_i}$.

\paragraph{A More Precise Approach: Guarded Inequalities}

The warm-up approach may lose critical precision due to ignoring $\gamma(\mathcal{P})$. To refine this, we incorporate the projection information in a compositional way. Let $\mathcal{P} = \{P_{s_1}, \dots, P_{s_m}\}$, and again partition the index set into disjoint subsets $T_1, \dots, T_k$. For each $T_j$, we define a local over-approximation of the global projector:
\begin{equation}
  \label{Eq:GammaPSubsetPj}
  \gamma(\mathcal{P}) \subseteq  P_j := \bigcap_{i \in T_j} P_{s_i} \otimes I_{[n] \setminus s_i}.
\end{equation}
We can write guarded versions of the local inequalities:
\begin{equation}
  \label{Eq:GuardedInequalityOne}
  \forall j. \quad
  P_j \left( \sum_{i \in T_j} A_{s_i} \otimes I_{[n] \setminus s_i} \right) P_j
  \leq
  P_j U_F^\dagger \left( \sum_{i \in T_j} B_{s_i} \otimes I_{[n] \setminus s_i} \right) U_F P_j.
\end{equation}
These inequalities are more precise than those in \eqref{Eq:WarmupInequality}, because they account for the structure imposed by $\mathcal{P}$, while still avoiding the need to compute the full projector $\gamma(\mathcal{P})$.

\begin{wrapfigure}{r}{0.4\textwidth}
  \centering
  \vspace{-3.0ex}
  \includegraphics[width=0.38\textwidth]{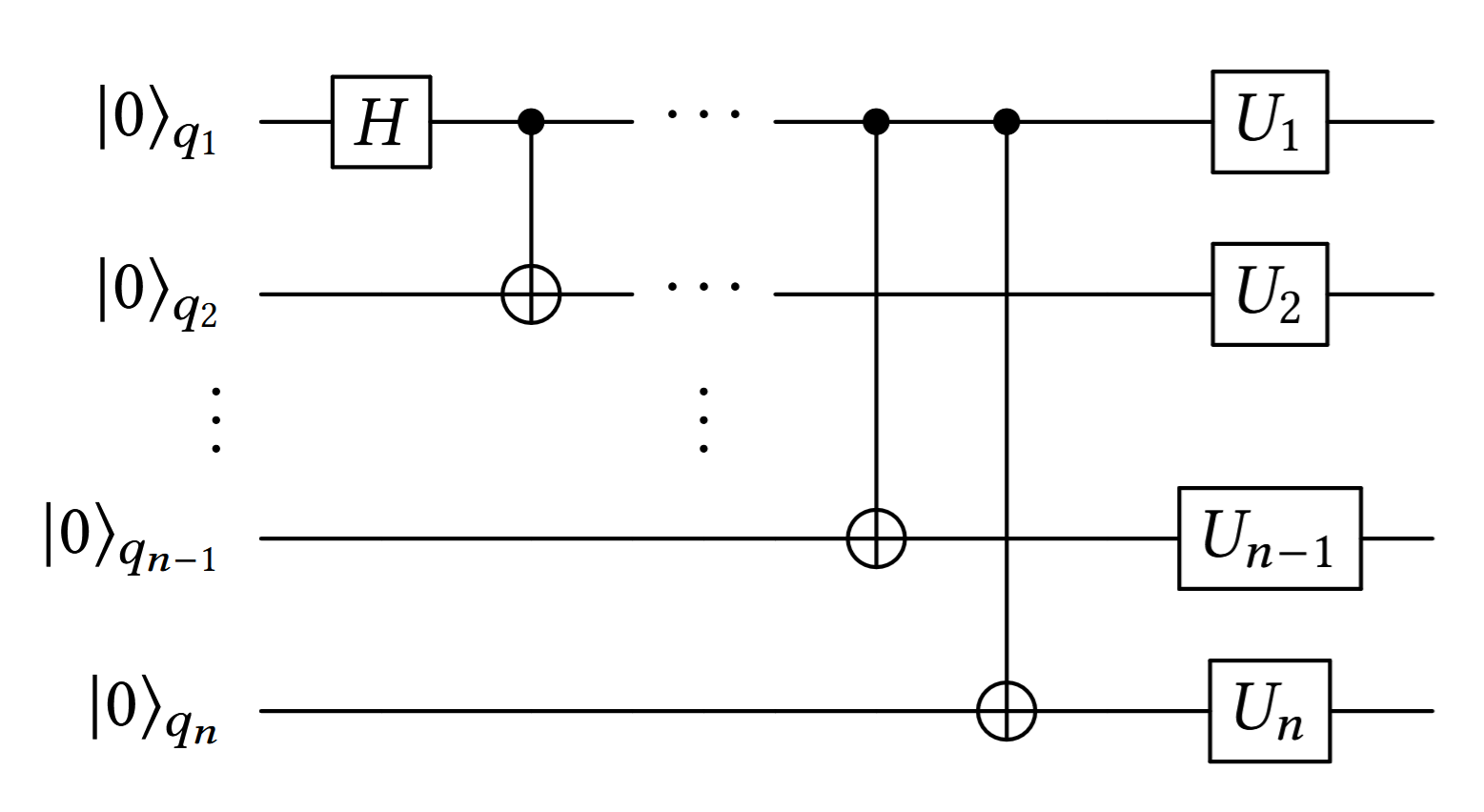}
  \vspace{-1.0ex}
  \caption{GHZ circuit with unitaries \(U_i\).}
  \label{fig:GHZ}
  \vspace{-2.0ex}
\end{wrapfigure}

\paragraph{Summary of Reasoning Tools}
We have practical tools for scalable reasoning under the \textsc{Unit-2} rule:
\begin{itemize}
  \item \textbf{Unguarded inequalities:}
  \Cref{Eq:WarmupInequality,Eq:WarmupInequalityForward} enable fast reasoning, but may be imprecise.
  \item \textbf{Guarded inequalities:}
  \Cref{Eq:GuardedInequalityOne} preserves more of $\gamma(\mathcal{P})$’s structure while remaining tractable.
\end{itemize}
Finally, in \Cref{sec:Eq20ImpliesEq15}, we prove that the guarded inequalities imply the condition stated as \Cref{target-giant-inequality}, thereby validating the correctness of our compositional approximation framework.

\textbf{Remark}: The methodology presented in this section is general. For a given circuit, useful predicates can typically be constructed straightforwardly because each gate in a quantum circuit is represented by a local unitary operator that typically acts on at most two qubits.
In each reasoning step, we analyze a single constant-qubit unitary---typically a one- or
two-qubit gate---and extract the predicate $B_{s_i}$ over its qubits, leaving the remainder untouched.
This approach is sound because the unitary does not affect the reduced density matrices of qubits outside its support. The convenience and effectiveness of this approach are illustrated in the examples throughout the paper.

\section{Quantitative Reasoning about a Generalized GHZ Circuit}
\label{Se:GHZ}

In this section, we consider a generalized GHZ circuit in which half the gates are arbitrary single-qubit unitaries, producing a highly entangled, densely parameterized state beyond the reach of classical simulation methods built on the Gottesman–Knill theorem~\cite{GottesmanKnill}.
The scalable techniques illustrated herein perform
compositional and projection-free reasoning over subsystems, thereby supporting tractable approximation of the circuit’s complex output state.

After presenting \logic in action on the generalized GHZ circuit, we discuss approaches to handling imprecision at the end of \Cref{PrecisionandMitigation}.

\subsection{Reasoning about the GHZ Circuit (\Cref{fig:GHZ})}
\label{sec:ReasoningAboutTheGHZCircuit}

The analysis first applies the ``warm-up'' method from \Cref{methods} (\Cref{Eq:WarmupInequality}).
The second phase of the analysis leverages QAI, employing \Cref{assertion} to confine the output state within a two-dimensional subspace. This critical dimensionality reduction facilitates precise amplitude approximations within our quantitative-reasoning framework. Strikingly, the resulting characterization matches the exact quantum state up to a phase, underscoring that with the right predicates \logic can be used to obtain quite precise results, even though \logic has restricted power because of the concern that proofs be scalable---i.e., both the matrices involved in the proof and the logical derivation grow only polynomially in the number of qubits.

We select the domain
$(\{1,2\},\{2,3\},\cdots,\{n-1,n\})$ for reasoning about \Cref{fig:GHZ}.


STEP 1. We first select the precondition to be $\{\mathscr{A} \mid \mathcal{P}\}$
\begin{align*}
\mathscr{A} &= (A_{1,2}, A_{2,3}, \cdots, A_{n-1,n}) = (\op{++}{++}, \op{++}{++}, \cdots, \op{++}{++}) & \\
\mathcal{P} &= (P_{1,2}, P_{2,3}, \cdots, P_{n-1,n}) = (\op{00}{00}, \op{00}{00}, \cdots, \op{00}{00}). &
\end{align*}
One can verify that the initial state $\op{0\cdots 0}{0\cdots 0}\vDash \mathcal{P}$.
We now use our proof rules to compute a postcondition for the GHZ circuit, 
given the precondition $\{\mathscr{A} \mid \mathcal{P} \}$.

After the first $H$ gate, we use inequality \labelcref{{Eq:WarmupInequality}} to derive the following:
\begin{align*}
&\sum_{i}{A_{s_i}}\otimes I_{[n]\setminus \{i,i+1\}}\leq H^{\dag}\sum_{i}{B_{i,i+1}}\otimes I_{[n]\setminus \{i,i+1\}}H\\
\Longleftrightarrow~&\sum_{i} H{A_{i,i+1}}\otimes I_{[n]\setminus \{i,i+1\}} H^{\dag}\leq\sum_{i}{B_{i,i+1}}\otimes I_{[n]\setminus \{i,i+1\}}\\
\Longleftrightarrow~&H{A_{1,2}}H^{\dag}\otimes I_{[n]\setminus \{1,2\}}+\sum_{i>1}{A_{i,i+1}}\otimes  HI_{[n]\setminus \{i,i+1\}} H^{\dag}\leq\sum_{i}{B_{i,i+1}}\otimes I_{[n]\setminus \{i,i+1\}}\\
\Longleftrightarrow~&H{A_{1,2}}H^{\dag}\otimes I_{[n]\setminus \{1,2\}}+\sum_{i>1}{A_{i,i+1}}\otimes  I_{[n]\setminus \{i,i+1\}} \leq\sum_{i}{B_{i,i+1}}\otimes I_{[n]\setminus \{i,i+1\}}.
\end{align*}
The final inequality above matches inequality \labelcref{Eq:WarmupInequalityForward} on a term-by-term basis.
We can satisfy inequality \labelcref{Eq:WarmupInequalityForward} when each of the following single-term inequalities hold:
\begin{align*}
  H{A_{1,2}}H^{\dag}\otimes I_{[n]\setminus \{1,2\}}\leq B_{1,2}\otimes I_{[n]\setminus \{1,2\}}
  \qquad\quad
  A_{i,i+1}\otimes  I_{[n]\setminus \{i,i+1\}} \leq B_{i,i+1}\otimes I_{[n]\setminus \{i,i+1\}} \ \mathrm{for}\ i>1.
\end{align*}
Each inequality can be realized as an equality, yielding the postcondition $\{\mathscr{B}\mid\mathcal{Q}\}$,
\begin{align*}
\mathscr{B} &= (B_{1,2}, B_{2,3}, \dots, B_{n-1,n}) 
= (H A_{1,2} H^\dag, A_{2,3}, \dots, A_{n-1,n}) = (\ketbra{0+}{0+}, \ketbra{++}{++}, \dots, \ketbra{++}{++}) \\
\mathcal{Q} &= (Q_{1,2}, Q_{2,3}, \dots, Q_{n-1,n}) 
= (H P_{1,2} H^\dag, P_{2,3}, \dots, P_{n-1,n}) = (\ketbra{+0}{+0}, \ketbra{00}{00}, \dots, \ketbra{00}{00}).
\end{align*}
After the first $\text{CNOT}$ gate has been applied to qubits $q_{1}q_2$, we use the \textsc{Unit} Rule to obtain a postcondition $\{ \mathscr{C} \mid \mathcal{R}\} := \{ (C_{1,2},C_{2,3},C_{3,4}\cdots,C_{n-1,n}) \mid  (R_{1,2},R_{2,3},R_{3,4}\cdots,R_{n-1,n}) \}$.
We use QAI to compute $R_{1,2}=\op{00}{00}+\op{11}{11}$, $R_{2,3}=\op{00}{00}+\op{10}{10}$, and $R_{i,i+1}=\op{00}{00}$ for $i > 2$.

What is left to determine are suitable values for $C_{1,2}$ and $C_{2,3}$.
By the \textsc{Unit-2} rule, partition
$\mathscr{C} = (C_{1,2}, C_{2,3}, C_{3,4}, \dots, C_{n-1,n})$ into
$(C_{1,2}, C_{2,3}), (C_{3,4}), \dots, (C_{n-1,n})$. That is,
\begin{small}
\begin{align*}
&\text{CNOT}_{1,2}(I_{[n]\setminus \{i,i+1\}}\otimes B_{i,i+1})\text{CNOT}_{1,2}^{\dag}\leq I_{[n]\setminus \{i,i+1\}}\otimes C_{i,i+1} \ \ \forall \ i>2,\\
&\text{CNOT}_{1,2}( B_{1,2}\otimes I_3+I_1\otimes B_{2,3})\text{CNOT}_{1,2}^{\dag}\leq C_{1,2}\otimes I_3+I_1\otimes C_{2,3}
\end{align*}
\end{small}
We can choose $C_{i,i+1}$ to be $\op{++}{++}$ for $i>2$, and derive the following:
\begin{small}
\begin{align*}
&\text{CNOT}_{1,2}( B_{1,2}\otimes I_3+I_1\otimes B_{2,3})\text{CNOT}_{1,2}^{\dag}\leq C_{1,2}\otimes I_3+I_1\otimes C_{2,3}\\
\Longleftrightarrow & \text{CNOT}_{1,2} B_{1,2}\text{CNOT}_{1,2}^{\dag}\otimes I_3+ \text{CNOT}_{1,2} I_1\otimes B_{2,3}\text{CNOT}_{1,2}^{\dag}\leq C_{1,2}\otimes I_3+I_1\otimes C_{2,3}\\
\Longleftrightarrow & \text{CNOT}_{1,2} \op{0+}{0+}\text{CNOT}_{1,2}^{\dag}\otimes I_3+ \text{CNOT}_{1,2} I_1\otimes \op{++}{++}\text{CNOT}_{1,2}^{\dag}\leq C_{1,2}\otimes I_3+I_1\otimes C_{2,3}\\
\Longleftrightarrow & \op{0+}{0+}\otimes I_3+ \text{CNOT}_{1,2} (\op{0}{0}+\op{1}{1})\otimes \op{++}{++}\text{CNOT}_{1,2}^{\dag}\leq C_{1,2}\otimes I_3+I_1\otimes C_{2,3}\\
\Longleftrightarrow & \op{0+}{0+}\otimes I_3+ \op{0}{0}\otimes \op{++}{++}+ \text{CNOT}_{1,2} \op{1}{1}\otimes \op{++}{++}\text{CNOT}_{1,2}^{\dag}\leq C_{1,2}\otimes I_3+I_1\otimes C_{2,3}\\
\Longleftrightarrow & \op{0+}{0+}\otimes I_3+ \op{0}{0}\otimes \op{++}{++}+ \text{CNOT}_{1,2} \op{1}{1}\otimes \op{+}{+}\text{CNOT}_{1,2}^{\dag} \otimes\op{+}{+}\leq C_{1,2}\otimes I_3+I_1\otimes C_{2,3}\\
 \Longleftrightarrow & \op{0+}{0+}\otimes I_3+I_1\otimes \op{++}{++}\leq C_{1,2}\otimes I_3+I_1\otimes C_{2,3}
 \end{align*}
 \end{small}
where, in the third-to-last and last steps, we use the following facts:
\begin{align}\label{smallinvaraint}
  \text{CNOT}_{1,2}\ket{0}\ket{+}=\ket{0}\ket{+},\ \ \ \text{CNOT}_{1,2}\ket{1}\ket{+}=\ket{1}X\ket{+}=\ket{1}\ket{+}
\end{align}
Therefore, we find that the post-state $\mathscr{C}$ predicate is
\begin{equation}
  \label{Eq:FinalCPredicate}
  (C_{1,2},C_{2,3},C_{3,4}\cdots,C_{n-1,n})
  =
  (\op{0+}{0+},\op{++}{++},\cdots,\op{++}{++}).
\end{equation}
(coinciding with the pre-state predicate $\mathscr{B} = (B_{1,2}, B_{2,3}, \dots, B_{n-1,n})$), and the post-state predicate is
\[
  \mathcal{R} = (R_{1,2},R_{2,3},\cdots,R_{n-1,n})
  =
  (\op{00}{00}+\op{11}{11},\op{00}{00}+\op{10}{10},\op{00}{00},\cdots,\op{00}{00}).  
\]

\Cref{Eq:FinalCPredicate} illustrates an advantage of our choice of predicates.
Because of properties such as those given in \Cref{smallinvaraint},
\Cref{Eq:WarmupInequalityForward} remains invariant under the application of $\text{CNOT}$ gates.
This invariance allows us to derive the strongest postcondition,
while preserving the local structure of the matrix representation of predicates.
As a result, we were able to make choices that made the inequalities that we worked with tight (or saturated, i.e., satified as equalities), making it easier to determine the postcondition. The right-hand side of \Cref{Eq:FinalCPredicate} continues to serve as the predicate of local observables, as reasoning continues about the remaining $\text{CNOT}$ gates.

After applying the $\text{CNOT}$ gate on $q_1$ and $q_r$, the postcondition can be chosen as $(\mathscr{D} \mid \mathcal{S})$, where
\begin{align*}
  \mathscr{D}&=(\op{0+}{0+},\op{++}{++},\cdots,\op{++}{++})\\
  \mathcal{S}&= (\op{00}{00}+\op{11}{11},\cdots,\op{00}{00}+\op{11}{11},\op{00}{00}+\op{10}{10},\op{00}{00},\cdots,\op{00}{00}).
\end{align*}
At the right end of the circuit, after the application of  
\( U_1 \otimes U_2 \otimes \cdots \otimes U_n \)—where each \( U_i \) is a single-qubit unitary—the locality structure of the predicates remains unchanged.  
Therefore, we can choose the postcondition to be \( (\mathscr{F} \mid \mathcal{T}) \), where 
\begin{align*}
\mathscr{F}&=(\beta_1\otimes\beta_2,\cdots,\beta_{n-1}\otimes\beta_n)\\
\mathcal{T}&= (\psi_1\otimes \psi_2+\phi_1\otimes\phi_2,\cdots, \psi_{n-1}\otimes \psi_n+\phi_{n-1}\otimes\phi_n)
\end{align*}
with $\beta_{i}=\op{\beta_i}{\beta_i},\ \psi_i=\op{\psi_i}{\psi_i},\ \phi_i=\op{\phi_i}{\phi_i}$, and
\begin{alignat*}{6}
    \ket{\beta_1} &= U_1\ket{0} & \quad \ket{\beta_i} &= U_{i}\ket{+} & \quad \forall i &> 1
  & \qquad\qquad
    \ket{\psi_i}  &= U_i\ket{0} & \quad \ket{\phi_i}  &= U_i\ket{1}   & \quad\forall i &\geq 1.
\end{alignat*}

Let the output state be $\rho = \op{\Psi}{\Psi}$. By \Cref{quantitative-validity-total}, our proof of $\mathscr{F}$ implies that
\begin{align} \label{ghz1}
\frac{n-1}{4}=\sum_{i=1}^{n-1}\tr(\op{00}{00}\op{++}{++})\leq\sum_{i=1}^{n-1}\tr[\rho_{i,i+1}(\beta_i\otimes\beta_{i+1})].
\end{align}
STEP 2.
This step mirrors Step 1, starting from the precondition $((\op{--}{--}, \op{--}{--}, \dots, \op{--}{--}) \mid \mathcal{P})$. The postcondition of the full circuit can then be computed as
\[
\Big((\delta_1 \otimes \delta_2, \dots, \delta_{n-1} \otimes \delta_n) \;\big|\; \mathcal{T}\Big), \quad \mathrm{with} \quad 
\delta_i = \op{\delta_i}{\delta_i}, \;\; \ket{\delta_1} = U_1 \ket{1}, \;\; \ket{\delta_i} = U_i \ket{-} \;\; \forall i>1
\]
These conditions imply that 
\begin{align}\label{ghz2}
\frac{n-1}{4}\leq \sum_{i=1}^{n-1}\tr[\rho_{i,i+1}(\delta_i\otimes\delta_{i+1})].
\end{align}
STEP 3 (QAI influence on QHL). According to Theorem~\ref{assertion}, 
$\mathcal{Q}$ implies that there exist complex numbers \(a\) and \(b\) such that \(|a|^2 + |b|^2 = 1\), and the output state of the GHZ circuit is
\begin{align*}
\ket{\Psi} = a \ket{\psi_1 \cdots \psi_n} + b \ket{\phi_1 \cdots \phi_n}.
\end{align*}

According to $\ip{\psi_i}{\phi_i}=0$, we have
$\rho_{i,i+1}=|a|^2 \psi_i\otimes \psi_{i+1}+|b|^2 \phi_i\otimes \phi_{i+1}.$
Then
\begin{align*}
\tr[(\psi_1\otimes \psi_{2})(\beta_1\otimes\beta_{2})]=\frac{1}{2},\  \tr[(\psi_i\otimes \psi_{i+1})(\beta_i\otimes\beta_{i+1})]=\frac{1}{4} \ \ \forall i>1\\
\tr[(\phi_1\otimes \phi_{2})(\beta_1\otimes\beta_{2})]=0,\ 
\tr[(\phi_i\otimes \phi_{i+1})(\beta_i\otimes\beta_{i+1})]=\frac{1}{4}\ \ \forall i>1\\
\tr[(\psi_1\otimes \psi_{2})(\delta_1\otimes\delta_{2})]=0,\ 
\tr[(\psi_i\otimes \psi_{i+1})(\delta_i\otimes\delta_{i+1})]=\frac{1}{4}\ \  \forall i>1\\
\tr[(\phi_1\otimes \phi_{2})(\delta_1\otimes\delta_{2})]=\frac{1}{2},\ 
\tr[(\phi_i\otimes \phi_{i+1})(\delta_i\otimes\delta_{i+1})]=\frac{1}{4}\ \  \forall i>1.
\end{align*}

\Cref{ghz1,ghz2} imply
\begin{align*}
\frac{n-1}{4}\leq |a|^2/2+\sum_{i=2}^{n-1}\frac{|a|^2+|b|^2}{4}=|a|^2/2+\frac{n-2}{4} \Longrightarrow \frac{1}{2}\leq |a|^2\\
\frac{n-1}{4}\leq |b|^2/2+\sum_{i=2}^{n-1}\frac{|a|^2+|b|^2}{4}=|b|^2/2++\frac{n-2}{4} \Longrightarrow \frac{1}{2}\leq |b|^2\\
\end{align*}
Together with $|a|^2+|b|^2=1$, we have $|a|^2=|b|^2=\frac{1}{2}$. In other words, there exists $\theta$ such that
\begin{align}\label{eqghz}
\ket{\Psi}=\frac{1}{\sqrt{2}}(\ket{\psi_1\cdots \psi_n}+e^{i\theta}\ket{\phi_1\cdots \phi_n}).
\end{align}

\Cref{eqghz} represents a closed-form expression for the circuit's output with a single unknown real parameter---a quantitative result unachievable by qualitative methods like QAI alone.

Moreover, the reasoning process scales with the number of qubits in the circuit:
during the forward-reasoning process described above, for the reasoning steps carried out for each gate, the total size of the matrices that represent local observables and local projections in the pre- and post-conditions is always linear in the number of qubits.
There are $O(n)$ gates;
hence, the total amount of space needed to write down the \logic proof is $O(n^2)$.

\subsection{Precision, Imprecision, and Principles for Selection of Local Observables} \label{PrecisionandMitigation}

\textbf{Inherent Imprecision:}
\logic characterizes the joint behavior of local reduced density matrices in a multi-qubit system, but it cannot capture global correlations that depend on relative phases. Because local reduced matrices cannot encode relative-phase information, any reasoning based solely on them necessarily cannot distinguish states that differ only in relative phase. For example, in the GHZ circuit, even if $U_i = I$, the states $\frac{1}{\sqrt{2}}(\ket{0\cdots 0} \pm \ket{1\cdots 1})$ have identical $k$-qubit reduced density matrices for all $k<n$, namely $\rho_k = \frac{1}{2}(\op{0\cdots 0}{0\cdots 0} + \op{1\cdots 1}{1\cdots 1})$, and are thus indistinguishable by \logic, even though they differ in their \emph{relative-phase information $\theta$ in \Cref{eqghz}}. In other words, while the local properties of each subsystem are identical, the relative-phase information that determines interference and correlations across all qubits is lost. As discussed in the Introduction, the QAI-like and QHL-like components in \logic function as two components of a reduced product, in the sense of abstract interpretation. Consequently, \logic is inherently unable to express relative-phase information.

\textbf{Principles for Selection of Local Observables:} 
An important aspect in applying \logic is the systematic selection of local observables and the corresponding “local domains” relevant to proving a given property of a circuit. For instance, the domain can be chosen according to the circuit structure: as we show in \Cref{sec:AlternativeGHZ}, selecting the sets $\{(1,2),(1,3),(1,4),\dots\}$ as the abstract domain for the GHZ circuit of \Cref{fig:GHZ} yields the same precision as \Cref{eqghz}.

Given a global property of a quantum circuit, incorporating it into our framework requires constructing a tuple of local observables that can capture or closely approximate this property. This challenge touches on a fundamental theme in physics—the relationship between local and global properties—which has been extensively studied and for which many powerful techniques exist. Our approach can also draw inspiration from programming logic, such as best-effort decision procedures in separation logic~\cite{Berdine2005,Calcagno2009} and shape-analysis abstractions~\cite{Sagiv2002,Gopan2004,Lev-Ami2000}, which rely on local information to establish global properties and can guide the principled selection of appropriate local observables.

From these insights, we can state some guiding principles for selecting effective local domains based on circuit structure and entanglement patterns:
(i) Balance precision and tractability while aligning with the properties of interest; in particular, approximate global properties using suitable local observables.  
(ii) Account for the circuit structure:
for circuits consisting of relatively few two-qubit gates, consider all two-qubit pairs appearing in the gate pattern.  
(iii) Refine the domain when it is insufficiently precise for the reasoning task.
We hope that future work will be able to refine and sharpen these principles.

\textbf{Mitigation:} 
For a given circuit, each proof in \logic generates an inequality involving the tuple of selected reduced density matrices.
For example, in the derivation given in \Cref{sec:ReasoningAboutTheGHZCircuit}, Step 1 derives \Cref{ghz1}, which only gives $|a|^2 > \frac{1}{2}$ and is insufficient to obtain \Cref{eqghz}.
By creating a second \logic proof with respect to a different precondition, and combining the two inequalities with the QAI component of the postcondition allowed us to derive $|a|^2 = \frac{1}{2}$. When reasoning with only one predicate, some properties may be lost or become indistinguishable due to the limited information it captures. By combining two or more predicates, we can cross-validate and recover information that would otherwise be missed, improving precision and reducing the “failure” cases that arise from single-predicate reasoning. Nevertheless, there exist properties that cannot be captured solely by tuples of local reduced density matrices; in such cases, \logic alone is insufficient, and integrating it with some non-local assertions as preconditions---such as efficiently representable non-local stabilizers compatible with local reasoning---may be necessary.

\section{Quantum Phase Estimation}
\label{Se:qpe}

Quantum Phase Estimation (QPE) is a fundamental quantum algorithm that, given a unitary $U$ and an eigenstate $\ket{\psi}$, estimates the eigenphase $\phi$, where $U \ket{\psi} = e^{2 \pi i \phi}\ket{\psi}$.
The algorithm uses a register of ancilla qubits (whose size determines precision), applies controlled-$U$ operations, and then takes an inverse QFT to concentrate amplitude on the binary fraction closest to $\phi$.
The success probability of QPE is the probability that the measurement outcome is the best approximation to the true eigenphase representable in the ancilla register.
For a single run, this probability is bounded below by \(4/\pi^2 \approx 0.405\), and it increases with additional qubits or repeated executions, ensuring a high likelihood of correct best phase estimation.
We apply \logic to analyze the QPE algorithm, whose intricate entanglement and delicate success probabilities have historically resisted scalable formal verification.
The presentation has two stages: first, we establish the correctness of the Quantum Fourier Transform (QFT) via QAI; second, we leverage this foundation to analyze the full QPE circuit using \logic, deriving a lower bound on its success probability.
To the best of our knowledge, this argument constitutes the first scalable formal verification of QPE and QFT, offering a compositional, tractable methodology for reasoning about both correctness and probabilistic performance in key quantum algorithms.

\subsection{Quantum Fourier Transform (QFT)}

The QFT---the quantum analogue of the discrete Fourier transform
(see \Cref{fig:QFT})---forms the computational core of landmark algorithms such as Shor’s factoring and QPE. We introduce a lossless local-reasoning framework for the QFT grounded in QAI.
Our method derives an abstract output state whose concretization exactly coincides with the true QFT output for all computational-basis inputs.
Moreover, \logic provides proofs that are polynomial in the size of the program while achieving exact semantic correspondence with the true QFT output---and, by extension, for QPE.

\begin{figure}[tb!]
\centering
\vspace{-2pt}
\includegraphics[width=125mm,scale=0.7]{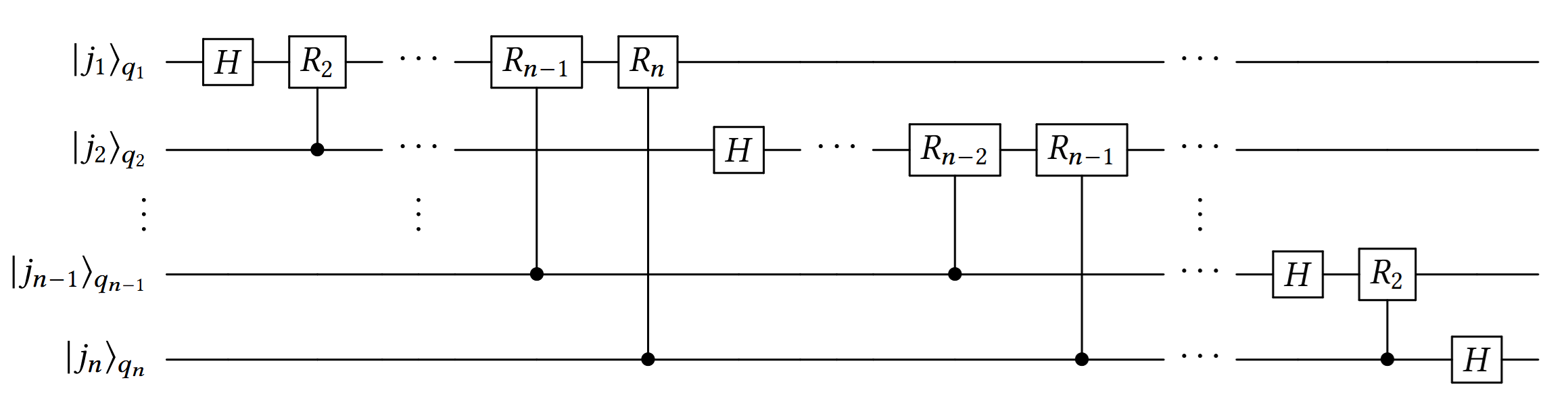}
\vspace{-2pt}
\caption{Quantum Fourier Transform: swap gates
that reverse the qubit order at the circuit’s end are omitted.}
\label{fig:QFT}
\vspace{-3pt}
\end{figure}

The circuit uses Hadamard and $R_m$ gates (see \Cref{sec:quantumbackground});
for \(m > 2\), \(R_m\) lies outside the Clifford group.
We use $\psi=\op{\psi}{\psi}$ for pure state $\ket{\psi}$,$0.x_1x_2\cdots x_n=\sum_{i=1}^n\frac{x_i}{2^i}$, and the following notation:
\begin{align}\label{notation}
  \ket{\psi_{x}}:=\frac{1}{\sqrt{2}}(\ket{0}+e^{2\pi i 0.x}\ket{1}).
\end{align}
We choose the domain \( (\{1\}, \{2\}, \ldots, \{n\}) \).  
For the input state \( \ket{j_1}_{q_1} \otimes \ket{j_2}_{q_2} \otimes \cdots \otimes \ket{j_{n-1}}_{q_{n-1}} \otimes \ket{j_n}_{q_n} \),  
we set the precondition to be
$\mathcal{P}=(P_{1},P_{2},\cdots,P_{n})=(\op{j_1}{j_1},\op{j_2}{j_2},\cdots,\op{j_n}{j_n})$.
In \Cref{sec:QFTviaQAI}, we prove the following property using QAI:
\begin{align}\label{Eq:QFTviaQAI}
\vDash^{QAI} \{\mathcal{P}\}\  QFT\ \{
(\psi_{j_n},\psi_{j_{n-1}j_n},\cdots,\psi_{j_2\cdots j_n},\psi_{j_1\cdots j_n})\}
\end{align}

The postcondition derived from QAI is an abstract state represented as a tuple of density matrices corresponding to pure quantum states.  
By applying the concretization function from \Cref{projectivepredicates-concrete}, we infer that the concrete state lies in the subspace  
\[
\psi_{j_n} \otimes \psi_{j_{n-1}j_n} \otimes \cdots \otimes \psi_{j_2\cdots j_n} \otimes \psi_{j_1\cdots j_n}.
\]  
Because the space so defined is a 1-dimensional subspace, it follows that the density matrix of the output state must exactly equal the pure-state projection onto this vector---that is, $\psi_{j_n} \otimes \psi_{j_{n-1}j_n} \otimes \cdots \otimes \psi_{j_2\cdots j_n} \otimes \psi_{j_1\cdots j_n}$.

During the forward-reasoning process described above, for the reasoning steps carried out for each gate, the total size of the matrices that represent local observables and local projections in the pre- and post-conditions is always linear in the number of qubits.
There are $O(n^2)$ gates;
hence, the total amount of space needed to write down the proof is $O(n^3)$.


\subsection{Quantum Phase Estimation (QPE)}

In this section, we present a quantitative local analysis of the QPE algorithm by combining both backward and forward reasoning techniques. We begin by decomposing the QPE algorithm into three constituent circuits and apply \logic to reason about each component individually---see \Cref{fig:QPE}.
For an unknown phase $\theta$ and a given constant $k$, \logic can formally establish that—with probability at least $\frac{4}{\pi^2}$—the QPE algorithm produces an output bitstring whose last $k$ bits constitute the optimal $k$-bit approximation to the least-significant $k$ bits of any $n$-bit binary representation/approximation of $\theta$.

\begin{figure}[tb!]
\centering
\vspace{-2pt}
\includegraphics[width=105mm,scale=0.5]{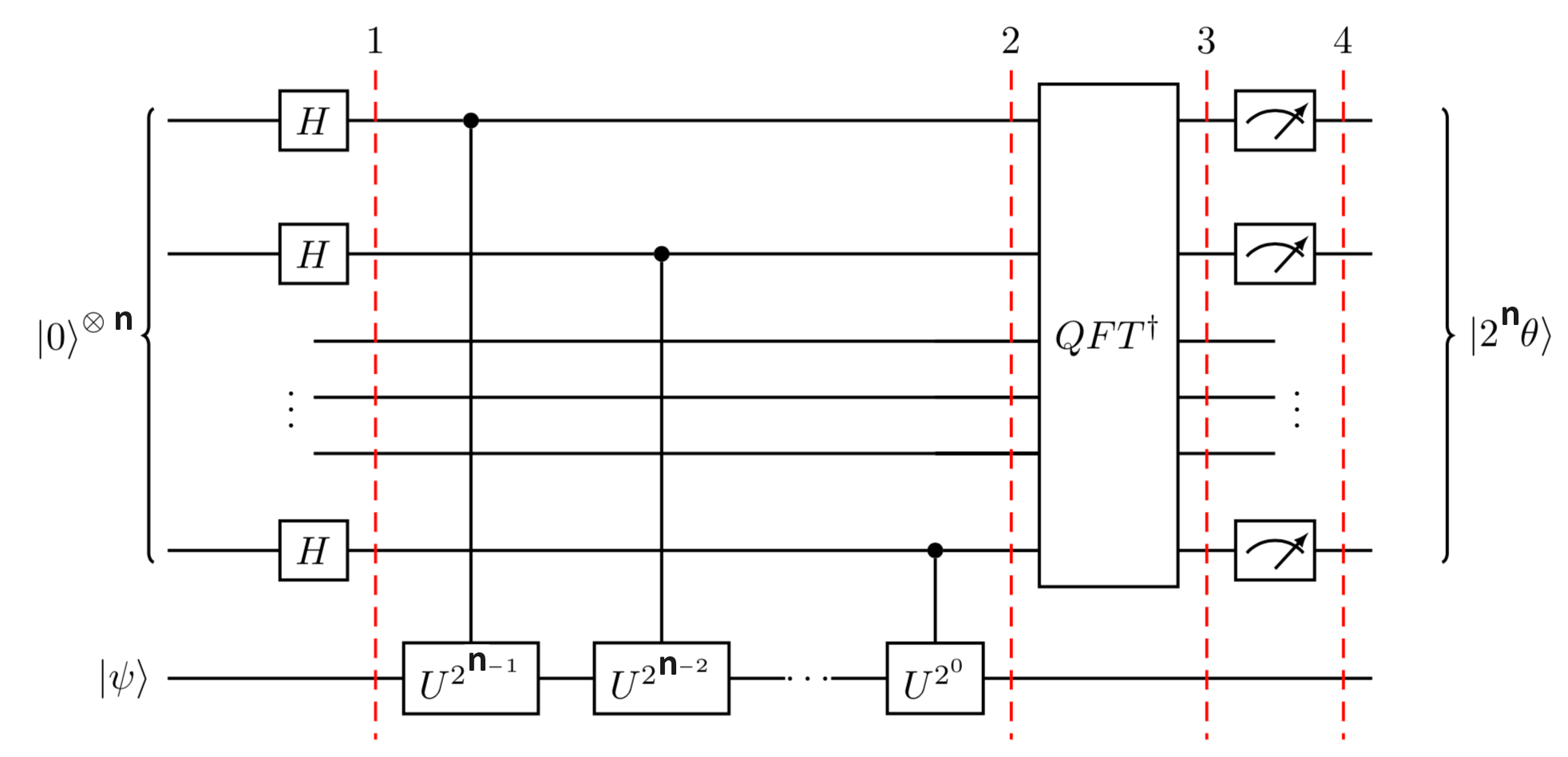}
\vspace{-3pt}
\caption{Quantum Phase Estimation: $U\ket{\psi}=e^{i\theta}\ket{\psi}$. We only consider the circuit without the measurements.}
\label{fig:QPE}
\vspace{-1pt}
\end{figure}

We denote by $\mathbf{C_1}$ the segment of the program that precedes the application of the inverse Quantum Fourier Transform, $\mathcal{QFT}^{\dag}$. The $\mathcal{QFT}^{\dag}$ operation can be decomposed into two parts: the initial sequence of swap gates that reverses the order of the qubits, and the subsequent controlled rotation gates that implement the core of the inverse Fourier transform. We use $\mathbf{C_2}$ to refer specifically to the subcircuit following the swap gates within $\mathcal{QFT}^{\dag}$.

Our analysis focuses on the least significant $k$ qubits, and proceeds in three steps. We will rely on the notation introduced in \Cref{notation}, and apply the QAI-based reasoning framework introduced in \Cref{methods} to verify the behavior of $\mathbf{C_1}$.

\textbf{Step 1: Reasoning about $\mathbf{C_2}$.} Let $U$ be the circuit corresponding to $\mathbf{C_2}$. For $\mathbf{C_2}$, we choose the domain of the predicate as the first $k$ qubits of the first $n$ qubits, together with the last $m$ qubits—the qubits that $U$ acts upon.
In \Cref{sec:ProofOfC1}, we use backward reasoning to show
\begin{align}\label{c-1}
\{\mathscr{A} \mid \mathcal{I}\}\ \mathbf{C_2}\ \{(\op{ j_{n-k+1} \cdots j_n}{ j_{n-k+1} \cdots j_n}) \mid \mathcal{I}\},
\end{align}
where $\mathscr{A}=(\op{\tau}{\tau}\otimes \op{\psi}{\psi})$,
$\ket{\tau} = \ket{\psi_{j_n}}\otimes \cdots \otimes \ket{\psi_{j_{n-k+1}\cdots j_n}}$, and
$M_{\mathscr{A}}=I_{1,\cdots,n-k}\otimes \op{\tau}{\tau}\otimes \op{\psi}{\psi}$.

\textbf{Remark}: Matrix multiplication is efficient for constant dimension.
Here, the matrices are $s^k \times 2^k$, which is a constant independent of $n$.

\textbf{Step 2: Reasoning about SWAP gates.}
The precondition of \Cref{c-1}, $\{\mathscr{A} \mid \mathcal{I}\}$, becomes the postconditon of the SWAP gates.
Because the action of the SWAP gates only changes the last $k$ qubits into the first $k$ qubits, we can write the precondition of the SWAP gates as $\{\mathscr{A}' \mid \mathcal{I}\}$ with $\mathscr{A}'=(\op{\tau}{\tau}\otimes \op{\psi}{\psi})$ on the first $k$ qubits and the last $m$ qubits. These arguments prove
\begin{align}\label{c-2}
  \{\mathscr{A}' \mid \mathcal{I}\}\ \mathbf{S}\ \{\mathscr{A} \mid \mathcal{I}\}
\end{align}
where we use $\mathbf{S}$ to denote the SWAP gates.\footnote{
  We change the abstract domain for simplicity of presentation. This change does not affect our statement's correctness because the SWAP gates' action is clear. If we want to fix the abstract domain, we can consider $(s_1,\cdots,s_m)$ with $s_1$ being the last $k$ qubits together with the last $m$ qubits; $s_2$ being the result of applying the first SWAP gate on $s_1$; $\cdots$; $s_m$ being the result of applying the last SWAP gate on $s_{m-1}$.
}

\textbf{Step 3: Reasoning about $\mathbf{C_1}$.}
For \(\mathbf{C_1}\), we select the predicate domain to include the first \(k\) qubits of the initial \(n\) and the last \(m\) qubits where \(U\) acts.
In \Cref{sec:ProofOfC3}, we prove the triple
\begin{align}\label{c-3}
\{r\op{0\cdots 0}{0\cdots 0}\otimes \op{\psi}{\psi}) \mid (\op{0\cdots 0}{0\cdots 0}\otimes \op{\psi}{\psi}\}\ \mathbf{C_1}\ \{\mathscr{A}' \mid \mathcal{P}\}
\end{align}
where 
\begin{align*}
r&=\Pi_{t=1}^k \cos^2[(2^{n-t}\theta-0. j_{n-t+1}\cdots j_n)\pi]=\frac{\sin^2(2^{n}\theta\pi)}{4^{k}\sin^2[(2^{n-k}\theta-0.j_{n-k+1}\cdots j_n)\pi]}\\
\mathcal{P}&=(\op{\omega}{\omega}\otimes \op{\psi}{\psi})\\
\ket{\omega}&=\frac{1}{2^{k/2}}(\ket{0}+e^{2\pi i 2^{n-1}\theta}\ket{1})\otimes \cdots\otimes (\ket{0}+e^{2\pi i 2^{n-k} \theta}\ket{1}).
\end{align*}

Let $U\ket{\psi}=e^{i\theta}\ket{\psi}$, and $\theta=\frac{a}{2^n}+\epsilon$ with $-\frac{1}{2^{n+1}}\leq \epsilon\leq \frac{1}{2^{n+1}}$ and $a$ is an integer with binary representation $a_1a_{2}\cdots a_n$. For $ j_{n-k}\cdots j_n=a_{n-k}\cdots a_n$, we have
\begin{align*}
r&=\frac{\sin^2(2^{n}\theta\pi)}{4^{k}\sin^2[(2^{n-k}\theta-0.j_{n-k+1}\cdots j_n)\pi]}
  =\frac{\sin^2(2^{n}\epsilon\pi)}{4^{k}\sin^2(2^{n-k}\epsilon\pi)} 
  \geq \frac{|2\cdot 2^{n}\epsilon|^2}{4^{k}\cdot |2^{n-k}\epsilon\cdot\pi|^2}
  \geq \frac{4}{\pi^2}.
 \end{align*}

Together with the \textsf{Seq} Rule and the \textsf{Con} Rule, the proved result can be interpreted as follows:
for any constant \( k \), the last \( k \) bits of the output will—with probability at least \( \frac{4}{\pi^2} \)—match the best \( k \)-bit binary approximation to the least significant \( k \) bits of the phase \( \theta \), provided that the first \( k \) input qubits are initialized to \( \ket{0,\ldots,0} \), regardless of the state of the remaining \( n-k \) qubits.

\section{Related Work}
\label{Se:RelatedWorks}

\textit{Combining Logic and Abstract Interpretation.}
Combining Hoare logic with abstract interpretation is a well-established strategy in classical program analysis.
Many software model-checking tools integrate logic-driven state-space exploration with dataflow information obtained via abstract interpretation, often computed as a preliminary pass. A prominent example is the SeaHorn solver for Constrained Horn Clauses, whose Crab component provides abstract-interpretation capabilities~\cite{Crab,GN21}.
This integration illustrates how abstract interpretation can complement logical reasoning, resulting in verification that is both scalable and precise.
Our work extends this principle to the quantum domain, where the combination supports reasoning about program semantics that would be difficult to achieve with either technique in isolation.

\textit{QHL.}
There are a variety of quantum Hoare logics.
A comparative study in \cite{rand2019verification} examined three representative logics~\cite{CMS,10.1007/978-3-642-10622-4_7,Ying11}. Quantum Hoare logics can be broadly categorized into expectation-based and satisfaction-based approaches. Following the seminal work of D’Hondt and Panangaden~\cite{DP06}, expectation-based approaches~\cite{Ying11,BHY19,li2019quantum,feng2021quantum} use positive operators as assertions for quantum states, defining the expectation that a state $\rho$ satisfies an assertion $M$ as $\tr(M\rho)$. {{In contrast, satisfaction-based logics---exemplified by Zhou et al.~\cite{ZYY19} and indepedently by Unruh~\cite{Unruh,Unruh2}---treat subspaces of the Hilbert space as assertions, providing a direct semantic correspondence between program correctness and the geometry of quantum states.}} Here, a state $\rho$ satisfies an assertion $P$ if the support of $\rho$ is included in $P$. In all of these works, predicates $M$ or $P$ are represented as $2^n \times 2^n$ matrices, which precludes efficient computation and limits scalability for larger quantum systems.

\textit{Quantum Separation Logic.} Quantum reasoning frameworks have made significant progress along multiple fronts. Quantum separation logic, introduced in \cite{zhou2021quantum}, extends the principles of Bunched Implications \cite{o'hearn_pym_1999} to the quantum setting, enabling local reasoning about quantum programs. The framework in \cite{10.1145/3498697} further incorporates hybrid state spaces by supporting both classical variables and the dynamic allocation and deallocation of quantum qubits. {{However, in both approaches, the separating conjunction is interpreted strictly as a tensor product—corresponding to quantum independence—which substantially limits the expressiveness and applicability of the logic. In addition, because predicates are restricted to projection operators, these systems support only qualitative rather than quantitative reasoning about quantum states and their interactions.}}

\textit{QAI.}
\cite{YP21} presented an approach to quantum abstract interpretation for reasoning about quantum circuits, using the satisfaction-based approach. Other works, such as \cite{Bichsel_2023}, investigate the abstract interpretation of quantum programming using variants of the Gottesman-Knill theorem.

{\textit{View abstraction.}
In classical program verification, \emph{view abstraction}~\cite{Abdulla2013} bears a strong resemblance to QAI: both track sets of small-size abstractions of a system.
This idea of focusing on compact representations was already used in shape-analysis frameworks~\cite{Lev-Ami2000,Sagiv2002}, highlighting a conceptual connection between \logic and classical shape analysis. A key difference is that view abstraction targets parameterized model checking, while \logic currently applies to a given circuit with a fixed number of qubits. Nevertheless, this connection suggests an exciting avenue for future work: if \logic can be mechanized, it may be possible to extend it to parameterized quantum circuits \cite{10.1145/3776712}. One could adopt ideas from view abstraction, such as dynamically detecting cut-off points beyond which the state-space search need not continue, or employing the heuristic that when views of size \(k\) fail, switch to views of size \(k+1\).

\textit{Symbolic abstraction, strongest consequence, and weakest sufficient condition.}
The inexpressibility issues discussed in \Cref{sec:FirstAttemptAtDefiningJudgments} are a manifestation of the constraints that one faces when working with an ``impoverished'' logic (or logic fragment).
These issues have been studied in the context of abstract interpretation as what is (now) called the \emph{symbolic-abstraction} problem \cite{DBLP:conf/vmcai/RepsSY04,DBLP:conf/cav/ThakurR12}, and phrased in purely logical terms as the \emph{strongest-consequence} problem \cite[\S5]{DBLP:conf/vmcai/RepsT16}, as follows:
\noindent
\begin{mdframed}[innerleftmargin = 3pt, innerrightmargin = 3pt, skipbelow=-0.5em]
  Given formula $\varphi \in \mathcal{L}$, and another logic $\mathcal{L}'$, find the strongest formula $\psi \in \mathcal{L}'$ such that $\varphi \vDash \psi$.
\end{mdframed}
\noindent
The strongest-consequence problem naturally arises in approximate forwards reasoning, to over-approximate a postcondition.
The discussion in \Cref{sec:FirstAttemptAtDefiningJudgments} concerned backwards reasoning for which one faces the dual problem, the \emph{weakest sufficient-condition} problem:
\noindent
\begin{mdframed}[innerleftmargin = 3pt, innerrightmargin = 3pt, skipbelow=-0.5em]
  Given formula $\varphi \in \mathcal{L}$, and a different logic $\mathcal{L}'$, find the weakest formula $\chi \in \mathcal{L}'$ such that $\chi \vDash \varphi$.
\end{mdframed}

As we saw in \Cref{sec:FirstAttemptAtDefiningJudgments}, the strongest consequence or weakest sufficient condition may not always be expressible in $\mathcal{L}'$, in which case one has to fall back on finding what one hopes is a suitably strong consequence or a suitably weak sufficient condition, respectively.
Scherpelz et al.\ \cite{DBLP:conf/pldi/ScherpelzLC07} presented a best-effort method for computing sufficient conditions as part of an algorithm for creating abstract transformers for use with parameterized predicate abstraction \cite{DBLP:conf/birthday/Cousot03}.
Their method performs weakest liberal precondition (WLP) of a post-state predicate with respect to a concrete transformer $\tau$, and then uses heuristics to identify combinations of pre-state predicates that entail the $\textrm{WLP}$ value.

\textit{Reasoning about Shor's factoring.}
The QFT circuit contains many non-Clifford gates, so variants of the Gottesman-Knill theorem do not directly apply. Shor’s algorithm has been formally verified in~\cite{feng2021quantum,https://doi.org/10.48550/arxiv.2204.07112}. The former uses an expectation-based approach with classical variables, while the latter employs the Coq proof assistant and the Small Quantum Intermediate Representation (sQIR) \cite{Hicks21}. Since Coq operates symbolically, the reasoning about quantum phase estimation in \cite{https://doi.org/10.48550/arxiv.2204.07112} is symbolic.

\textit{Symbolic verification.}
There has been work to extend symbolic-verification techniques to the quantum domain~\cite{10.1145/3725725,CACM:CCLLTY25},
using logic- and automata-based techniques developed for symbolic verification of classical programs to analyze the correctness of quantum programs.
In contrast, \logic is not symbolic in nature.
Reasoning steps in \logic can involve matrix multiplications and other mathematical operations on specific values of specific sizes.



\textit{Packing of variables in abstract domains.}
The motivation for using local observables and local projections defined with respect to a tuple $(s_1,\cdots,s_m)$ of sets of qubit indexes is to make reasoning scalable:
each reasoning step involves only a small number of qubits.
This idea is similar to the idea of ``packing'' variables in numeric domains, as used in Astr{\'e}e \cite{DBLP:conf/pldi/BlanchetCCFMMMR03}:
a program's numeric variables are placed in sets (``packs'') so that each abstract transformer of a numeric abstract domain can operate on a single pack at a time.
As with our qubit sets, each variable can be in multiple packs.

\section{Conclusion and Future work}
\label{sec:conclusion}

This paper introduces a framework for local reasoning about quantum circuits that is both scalable and addresses quantitative properties.
Our logic, \logic, combines QHL, which tracks quantitative properties of quantum states, with QAI, which provides scalable reasoning about qualitative properties.
Unlike a prior scalable framework that is restricted to Clifford circuits \cite{GottesmanKnill}, \logic supports non-Clifford gates, giving it much broader applicability.
By tracking tuples of reduced density matrices via tuples of local observables, \logic fills a long-standing gap in methods for reasoning about quantum programs, enabling scalable and quantitative verification of quantum circuits on classical computers.
We apply \logic to GHZ circuits with non-Clifford unitaries, successfully characterizing the output state up to a relative phase.  
We demonstrate its use on QFT, and derive a lower bound on the success probability of QPE.
Avenues for future research include


\begin{enumerate}
    \item \textbf{Other applications.}
    Apply \logic to other circuits, such as W-state circuits and the Bernstein–Vazirani (BV) algorithm, as well as other important classes of quantum algorithms, including parameterized quantum circuits. \logic is well-suited to quantum circuits composed of unitary gates acting on a constant number of qubits. The W-state circuit appears to be a particularly suitable target. A key advantage of the W-state circuit is that the collection of all single-qubit reduced density matrices contains sufficient information to uniquely identify the state within the continuous family of W-type states \cite{Yu2013,Sawicki2013}. 
    The BV circuit requires a model describing the oracle, which remains manageable, but could introduce additional complexity.

    \item \textbf{Quantum programs with measurements and classical control.} Develop the theoretical and technical foundations for handling quantum programs that include measurements, branching, classical control, and loops. 
    Currently, \logic applies to measurement-free circuits, rather than a full programming language with these features, and while we anticipate such extensions are feasible, they introduce additional challenges. In particular, in the presence of loops, verification typically relies on suitable loop invariants, which must be both sufficiently strong to establish the desired property and expressible within the assertion language of \logic. This requirement may rule out certain properties that could otherwise be handled in richer logical frameworks.
    
    \item \textbf{Mechanized proof search.}
    We face several challenges if we are to automate reasoning with \logic. On the theoretical side, a key open problem is how to systematically construct suitable local observables from desired global properties. On the practical side, our approach relies on semidefinite programming (SDP) solvers that can produce corresponding preimages of the optimization outputs, which can serve as candidate predicates or local observables. Given a collection of such candidates, a further challenge is to select high-quality ones effectively.
    
    \item \textbf{Integration with symbolic methods.} Our approach leverages numerical analysis to enable scalable reasoning about general quantum circuits, even in the absence of known algebraic structure. This
    situation is complementary to existing mechanized proof frameworks (e.g., VyZX~\cite{lehmann2026vyzx} and VOQC~\cite{Hicks21}), which rely primarily on symbolic verification. By generating candidate predicates and invariants numerically, our method opens a pathway for integrating numerical and symbolic approaches, expanding the class of quantum programs amenable to mechanized reasoning.
\end{enumerate}

\begin{acks}
Supported, in part, by NSF grant OMA-2016245 and DARPA grant HR00112520301.
\end{acks}

\newpage
\bibliographystyle{ACM-Reference-Format}
\bibliography{main}

\iftoggle{TR}{%
  \newpage
  \appendix

\section*{Appendices}

\section{Lemmas and Proofs of Lemmas}
\label{sec:ProofsOfTwoLemmas}


\begin{lem}\label{inclusion}
For projections $P,Q$,
$P\subseteq Q \Longrightarrow PQ=QP=P.$
\end{lem}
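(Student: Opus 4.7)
The plan is to exploit the defining property of an orthogonal projection $Q$: for any vector $v$, we have $Qv = v$ if and only if $v \in S_Q$. Combined with the subspace interpretation $P \subseteq Q \Longleftrightarrow S_P \subseteq S_Q$ recalled in \Cref{sec:background-linear-algebra}, this immediately gives one of the two equalities, and the other follows by taking adjoints.

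More concretely, the first step is to show $QP = P$. For any vector $v$, the vector $Pv$ lies in $S_P$, since $P(Pv) = P^2 v = Pv$. Because $S_P \subseteq S_Q$, we have $Pv \in S_Q$, and hence $Q(Pv) = Pv$. As this holds for all $v$, we conclude $QP = P$ as operators.

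The second step is to obtain $PQ = P$ by taking the conjugate transpose of the identity $QP = P$. Using $P^\dagger = P$ and $Q^\dagger = Q$ (since both are orthogonal projections), we get
\[
PQ = P^\dagger Q^\dagger = (QP)^\dagger = P^\dagger = P.
\]
Combining the two equalities yields $PQ = QP = P$, completing the proof. There is no real obstacle here; the key observation is simply that every projection acts as the identity on its own image, together with self-adjointness to pass between left- and right-multiplication.
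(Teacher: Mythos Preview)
Your proof is correct. The paper actually states this lemma without proof (it appears in Appendix~A as a bare statement), so there is no argument in the paper to compare against; your approach via $Pv \in S_P \subseteq S_Q$ and then taking adjoints is the standard one and is entirely adequate here.
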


\begin{lem}\label{order}
For projections $P,Q$ and observables $A,B$,
$A\leq B \Longrightarrow PAP\leq PBP.$
\end{lem}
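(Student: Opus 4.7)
The plan is to reduce the claim to the fact that congruence transformations preserve positive semidefiniteness. By definition of the Löwner order, the hypothesis $A \leq B$ means the Hermitian matrix $C := B - A$ is positive semidefinite. By linearity, $PBP - PAP = P(B-A)P = PCP$, so the conclusion $PAP \leq PBP$ is equivalent to the assertion $PCP \geq 0$.

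To establish $PCP \geq 0$, I would take either of two routes. One option is the spectral/factorization argument: since $C \geq 0$, it admits a square root (or any factorization) $C = D^{\dagger} D$, hence
\[
  PCP \;=\; P D^{\dagger} D P \;=\; (DP)^{\dagger}(DP),
\]
which is manifestly positive semidefinite, using $P = P^{\dagger}$ for the projection $P$. Alternatively---and perhaps more directly---one can check the defining inequality on arbitrary vectors $v$: using $P^{\dagger}=P$,
\[
  \langle v,\, PCP\, v\rangle \;=\; \langle Pv,\, C\, (Pv)\rangle \;\geq\; 0,
\]
where the final inequality is just $C \geq 0$ applied to the vector $Pv$.

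There is essentially no obstacle here; the only subtlety is to note that the proof does not actually require $P$ to be a projection---it works for arbitrary $P$ using $P^{\dagger}$ in place of $P$ on the left---but in the context of \logic the lemma is stated and will be applied for projections. Combining this lemma with Lemma~\ref{inclusion} is then what underlies the step from Inequality~(\ref{Eq:InequalityOne}) to its ``sandwiched by $\gamma(\mathcal{P})$'' form used in \Cref{methods}.
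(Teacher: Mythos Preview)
Your proof is correct; both the square-root factorization and the direct vector computation are standard, clean arguments for the congruence monotonicity $A\leq B \Rightarrow PAP\leq PBP$. The paper itself does not actually supply a proof of this lemma---it is merely stated in the appendix as a basic fact---so there is no alternative approach to compare against, and your write-up would serve perfectly well as the missing justification.
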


\textbf{\Cref{rdm}.}
Let \( \rho \) be the density matrix of an \( n \)-qubit system, and let \( s \subseteq [n] \). Then for any observable \( A_s \) acting on subsystem \( s \),
\[
\tr\left( (A_s \otimes I_{[n] \setminus s}) \rho \right) = \tr\left( A_s \rho_s \right).
\]

\begin{proof}
We express \( \rho \) as 
\[
\rho = \sum_{i,j} \rho_{i,j} \otimes \ket{i}\bra{j},
\]
where \( \rho_{i,j} \) are operators on subsystem \( s \), and \( \{ \ket{i} \} \) is an orthonormal basis of subsystem \( [n] \setminus s \). Then:
\[
\begin{aligned}
\tr\left( (A_s \otimes I_{[n] \setminus s}) \rho \right)
&= \tr\left( \sum_{i,j} (A_s \rho_{i,j}) \otimes \ket{i}\bra{j} \right) \\
&= \sum_{i} \tr( A_s \rho_{i,i} ) \\
&= \tr\left( A_s \sum_i \rho_{i,i} \right) \\
&= \tr\left( A_s \rho_s \right),
\end{aligned}
\]
where \( \rho_s = \tr_{[n] \setminus s}(\rho) = \sum_i \rho_{i,i} \).

\end{proof}

\section{Proof of Theorem \ref{lifting}}
\label{sec:ReductionPrincipleProof}

\begin{proof} 


Assume that $\models\{\mathscr{P}\}C\{\mathscr{Q}\}$
as defined
in \Cref{quantitative-validity-total}. Then for any $\rho$, we have 
\[
\tr(M_{\mathscr{P}}\rho)\leq \tr(M_{\mathscr{Q}}\sem{\mathbf{C}}(\rho))
\]
Let us choose $\rho\models \mathcal{P}$. According to Lemma \ref{lem:satisfyprojection}, we have
\[
\tr(M_{\mathscr{P}}\rho)=\tr (\sum_i P_{s_i}\rho_{s_i})=\sum_i\tr ( P_{s_i}\rho_{s_i})=\sum_i 1=m.
\]
On the other hand,
\begin{align*}
&\tr(M_{\mathscr{Q}}\sem{\mathbf{C}}(\rho))\\
=&\tr \left(\left(\sum_i Q_{s_i}\otimes I_{[n]\setminus s_i}\right)\sem{\mathbf{C}}(\rho)\right)\\
=&\sum_i \tr (Q_{s_i}\tr_{[n]\setminus s_i} \sem{\mathbf{C}}(\rho))\\
\leq& \sum_i \tr (I_{s_i}\tr_{[n]\setminus s_i} \sem{\mathbf{C}}(\rho))\\
=&\sum_i 1=m.
\end{align*}
where the inequality 
follows from \Cref{lem:tr_monotonic_in_Loewner_order}
and $Q_{s_i} \leq I_{s_i}$.

Therefore, $\tr(M_{\mathscr{Q}}\sem{\mathbf{C}}(\rho))$ and $\tr (Q_{s_i}\tr_{[n]\setminus s_i} \sem{\mathbf{C}}(\rho))=1$ for all $i$. According to Lemma \ref{lem:satisfyprojection}, we know \(\tr_{[n]\setminus s_i} \sem{\mathbf{C}}(\rho) \vDash Q_{s_i}\),
which implies
\[
  \models^{\rm QAI} \{\mathcal{P}\}\ C\ \{\mathcal{Q}\},
\]
where $\models_{\rm}^{\rm QAI}\{\mathcal{P}\}\ C\ \{\mathcal{Q}\}$ iff for all $\rho$, $\rho\vDash \mathcal{P}$ implies $\sem{\mathbf{C}}(\rho)\vDash \mathcal{Q}$.
\end{proof}

\section{Proof of Theorem \ref{thm sound QSL}}
\label{sec:SoundnessProof}

\begin{proof} We prove the 
soundness
of each rule.
	
Rule \textsc{Skip}
\begin{align*}
\textsc{Skip}\quad\ \frac{}{\{\mathscr{A} \mid \mathcal{P}\}\ \mathbf{Skip}\ \{\mathscr{A} \mid \mathcal{P}\}}
\end{align*}
For any $\rho\vDash^{QAI} \mathcal{P}$, we have $\sem{\mathbf{Skip}}(\rho)=\rho\vDash^{QAI}\mathcal{P}$. Moreover,
\begin{align*}
\tr[M_{\mathscr{A}}(\rho)]\leq \tr[M_{\mathscr{A}}(\sem{\mathbf{Skip}}(\rho))].
\end{align*}

Rule \textsc{Unit-1}
\begin{align*}
\textsc{Unit-1}\quad\ \frac{  \gamma(\mathcal{P})M_{\mathscr{A}}\gamma(\mathcal{P})\leq \gamma(\mathcal{P})U_F^{\dag}M_{\mathscr{B}}U_F\gamma(\mathcal{P})}{
			\{\mathscr{A} \mid \mathcal{P}\}\ \qbar:=U_F\left[\qbar\right] \{\mathscr{B} \mid U_F^{\sharp}(\mathcal{P})\}}
\end{align*}

For $\rho\vDash^{QAI} \mathcal{P}$, using Theorem \ref{QAI}, we have
\begin{align*}
\sem{\qbar:=U_F\left[\qbar\right]}(\rho)=U_F\rho U_F^{\dag}\vDash^{QAI} U_F^{\sharp}(\mathcal{P}).
\end{align*}
Moreover, we have $\rho\vDash \gamma(\mathcal{P})$;
that is, $\rho = \gamma(\mathcal{P})\rho\gamma(\mathcal{P})$.
According to Lemma \ref{lem:interproduct}, we know
\begin{align*}
{\tr}(M_{\mathscr{A}}\rho)={\tr}(M_{\mathscr{A}}\gamma(\mathcal{P})\rho\gamma(\mathcal{P}))=\tr(\rho \gamma(\mathcal{P})M_{\mathscr{A}}\gamma(\mathcal{P}))
\end{align*}
On the other hand, Lemma \ref{lem:interproduct} also implies 
\begin{align*}
&{\tr}(M_{\mathscr{B}}\sem{\qbar:=U_F\left[\qbar\right]}(\rho))\\
=&{\tr}[M_{\mathscr{B}}\sem{\qbar:=U_F\left[\qbar\right]}(\gamma(\mathcal{P})\rho\gamma(\mathcal{P})]\\
=&\tr[M_{\mathscr{B}}U_F\gamma(\mathcal{P})\rho\gamma(\mathcal{P}) U_F^{\dag}]\\
=&\tr[\rho \gamma(\mathcal{P})U_F^{\dag}M_{\mathscr{B}}U_F\gamma(\mathcal{P})].
\end{align*}
Therefore, the condition 
\begin{align*}
\gamma(\mathcal{P})M_{\mathscr{A}}\gamma(\mathcal{P})\leq \gamma(\mathcal{P})\sem{\mathbf{C}}^*(M_{\mathscr{B}})\gamma(\mathcal{P})
\end{align*}
implies that, for $\rho\vDash^{QAI} \mathcal{P}$, we have
\begin{align*}
{\tr}(M_{\mathscr{A}}\rho)=\tr(\rho \gamma(\mathcal{P})M_{\mathscr{A}}\gamma(\mathcal{P}))\leq\tr[\rho \gamma(\mathcal{P})U_F^{\dag}M_{\mathscr{B}}U_F\gamma(\mathcal{P})]\leq
{\tr}(M_{\mathscr{B}}\sem{\qbar:=U_F\left[\qbar\right]}(\rho))
\end{align*}
by invoking Lemma \ref{lem:interproduct}.

This argument proves Rule \textsc{Unit-1}.

Rule \textsc{Unit-2}
\begin{align*}
\textsc{Unit-2}\quad\ \frac{\{\mathscr{A}_i \mid \mathcal{P}_i\}\ \bar{q}:=U_F\left[\bar{q}\right]\{\mathscr{B}_i \mid U_F^{\sharp}(\mathcal{P}_i)\}}{
			\{\oplus_{i}\mathscr{A}_i \mid \mathcal{P}\}\ \bar{q}:=U_F \left[\bar{q}\right]\{\oplus_{i}\mathscr{B}_i \mid \oplus_{i}U_F^{\sharp}(\mathcal{P}_i)\}}
            \end{align*}

            For $\rho\vDash^{QAI} \mathcal{P}_i$, let $\sigma:=\sem{{\bar{q}:=U_F\left[\bar{q}\right]}}(\rho)$. 
            We observe that 
            \begin{align*}  
            \sum_i M_{\mathscr{A}_i}=M_{\oplus_i\mathscr{A}_i}\\
            \textrm{and}\qquad \sigma\vDash^{QAI} \mathcal{Q}_1,\cdots,\mathcal{Q}_k  \Longrightarrow &\sigma\vDash^{QAI} \oplus_i^k\mathcal{Q}_i
              \end{align*}
         Therefore,
         \begin{align*}   
         &\{\mathscr{A}_i \mid \mathcal{P}_i\}\ \bar{q}:=U_F\left[\bar{q}\right]\{\mathscr{B}_i \mid U_F^{\sharp}(\mathcal{P}_i)\}\\
         \Longrightarrow &\sigma\vDash^{QAI} U_F^{\sharp}(\mathcal{P}_i) \ \land \ \tr[M_{\mathscr{A}_i}\rho]\leq \tr[M_{\mathscr{B}_i}\sigma]\\
         \Longrightarrow &\sigma\vDash^{QAI} U_F^{\sharp}(\mathcal{P}_1) \ \land\  \cdots \ \land\  \sigma\vDash^{QAI} U_F^{\sharp}(\mathcal{P}_k) 
         \ \land\ \tr[ \sum_i M_{\mathscr{A}_i}\rho]\leq \tr[\sum_i M_{\mathscr{B}_i}\sigma]\\
          \Longrightarrow &\sigma\vDash^{QAI} \oplus_i U_F^{\sharp}(\mathcal{P}_1) \ \land \ 
         \tr[M_{\oplus_i\mathscr{A}_i}\rho]\leq \tr[\oplus_i M_{\mathscr{B}_i}\sigma]
         \end{align*}
         
This argument proves Rule \textsc{Unit-2}.

Rule \textsc{Seq}
\begin{align*}
\textsc{Seq}\quad
		\frac{\{\mathscr{A} \mid \mathcal{P}\}\ \mathbf{C}_1\ \{\mathscr{D} \mid \mathcal{R}\}\ \ \ \ \ \   \{\mathscr{D} \mid \mathcal{R}\}\ \mathbf{C}_2\ \{\mathscr{B} \mid \mathcal{Q}\}}{\{\mathscr{A} \mid \mathcal{P}\}\ \mathbf{C}_1;\mathbf{C}_2\ \{\mathscr{B} \mid \mathcal{Q}\}}
\end{align*}

For $\rho\vDash^{QAI} \mathcal{P}$, we have
\begin{align*}
\{\mathscr{A} \mid \mathcal{P}\}\ \mathbf{C}_1\ \{\mathscr{D} \mid \mathcal{R}\}
\Longrightarrow \sem{\mathbf{C}_1}(\rho)\vDash^{QAI} \mathcal{R}, \ \ \ \tr[M_{\mathscr{A}}\rho]\leq \tr[M_{\mathscr{D}}(\sem{\mathbf{C}_1}(\rho))].
\end{align*}
According to $\{\mathscr{D} \mid \mathcal{R}\}\ \mathbf{C}_2\ \{\mathscr{B} \mid \mathcal{Q}\}$ and $\sem{\mathbf{C}_1}(\rho)\vDash^{QAI} \mathcal{R}$, we obtain that
\begin{align*}
\sem{\mathbf{C}_2}(\sem{\mathbf{C}_1}(\rho))\vDash^{QAI} \mathcal{Q}\\
\tr[M_{\mathscr{D}}(\sem{\mathbf{C}_1}(\rho))]\leq \tr[\sem{M_{\mathscr{B}}(\mathbf{C}_2}(\sem{\mathbf{C}_1}(\rho)))]
\end{align*}
According to the fact that $\sem{\mathbf{C}_1;\mathbf{C}_2}(\rho))=\sem{\mathbf{C}_2}(\sem{\mathbf{C}_1}(\rho))$, we know that if $\rho\vDash^{QAI} \mathcal{P}$,
\begin{align*}
\sem{\mathbf{C}_1;\mathbf{C}_2}(\rho))\vDash^{QAI} \mathcal{Q}\\
\tr[M_{\mathscr{A}}\rho]\leq \tr[M_{\mathscr{B}}(\sem{\mathbf{C}_1;\mathbf{C}_2}(\rho))]
\end{align*}
This argument proves Rule \textsc{Seq}.

Rule \textsc{Con}
\begin{align*}
\textsc{Con}\quad
		\frac{\{\mathscr{A} \mid \mathcal{P}\}\mathbf{C}\{\mathscr{B} \mid \mathcal{Q}\}, \ \ \mathscr{D}\sqsubseteq\mathscr{A},\ \mathscr{B}\sqsubseteq\mathscr{E},\  \mathcal{R}\sqsubseteq \mathcal{P},\ \mathcal{Q}\sqsubseteq \mathcal{T}}{\{\mathscr{D} \mid \mathcal{R}\}\ \mathbf{C}\ \{\mathscr{E} \mid \mathcal{T}\}}
\end{align*}

For any $\rho\vDash^{QAI} \mathcal{R}$, we have
\begin{align*}
\rho\vDash^{QAI} \mathcal{R}\sqsubseteq \mathcal{P}.
\end{align*}
Moreover, 
\begin{align*}
\{\mathscr{A} \mid \mathcal{P}\}\ \mathbf{C}\ \{\mathscr{B} \mid \mathcal{Q}\}
\Longrightarrow 
\sem{\mathbf{C}}(\rho)\vDash^{QAI} \mathcal{Q}\sqsubseteq \mathcal{T},\ \ 
\tr[M_{\mathscr{A}}\rho]\leq \tr[M_{\mathscr{B}}(\sem{\mathbf{C}}(\rho))].
\end{align*}
According to Lemma \ref{mono}, we have
\begin{align*}
&M_{\mathscr{D}}\leq M_{\mathscr{A}}, \ \  M_{\mathscr{B}}\leq M_{\mathscr{E}}\\ 
\Longrightarrow &\tr[M_{\mathscr{D}}\rho]\leq \tr[M_{\mathscr{A}}\rho]\leq \tr[M_{\mathscr{B}}(\sem{\mathbf{C}}(\rho))]\leq \tr[M_{\mathscr{E}}(\sem{\mathbf{C}}(\rho))].
\end{align*}
This argument proves the correctness of Rule \textsc{Con}.
\end{proof}


\section{Equation (\ref{Eq:GuardedInequalityOne}) implies Equation (\ref{target-giant-inequality})}
\label{sec:Eq20ImpliesEq15}

\begin{align}
   & \quad \textrm{for all $j$} & P_j \left( \sum_{i \in T_j} A_{s_i} \otimes I_{[n] \setminus s_i} \right) P_j
   &\leq
    P_j U^\dagger \left( \sum_{i \in T_j} B_{s_i} \otimes I_{[n] \setminus s_i} \right) U P_j   \tag{\ref{Eq:GuardedInequalityOne}} \\
  \label{Eq:InequalityTwo}
\Longleftrightarrow  & \quad \textrm{for all $j$} & \gamma(\mathcal{P}) P_j \left(\sum_{i\in T_j }A_{s_i}\otimes I_{[n]\setminus s_i} \right) P_j \gamma(\mathcal{P})
   &\leq \gamma(\mathcal{P}) P_j U^{\dag} \left(\sum_{i\in T_j }B_{s_i}\otimes I_{[n]\setminus s_i}\right)U P_j \gamma(\mathcal{P}) \\
  \label{Eq:InequalityThree}
  \Longrightarrow & \quad \textrm{for all $j$} &
  \gamma(\mathcal{P}) \left(\sum_{i\in T_j }A_{s_i}\otimes I_{[n]\setminus s_i}\right) \gamma(\mathcal{P})
  &\leq \gamma(\mathcal{P}) U^{\dag} \left(\sum_{i\in T_j }B_{s_i}\otimes I_{[n]\setminus s_i}\right)U \gamma(\mathcal{P}) \\
  \Longrightarrow &&\sum_j\gamma(\mathcal{P}) \left(\sum_{i\in T_j }A_{s_i}\otimes I_{[n]\setminus s_i}\right) \gamma(\mathcal{P})
  &\leq \sum_j\gamma(\mathcal{P}) U^{\dag} \left(\sum_{i\in T_j }B_{s_i}\otimes I_{[n]\setminus s_i}\right)U \gamma(\mathcal{P}) \notag \\
  \label{Eq:SomethingKnownInTheTooLargeSystem}
  \Longrightarrow &&\gamma(\mathcal{P}) \left(\sum_{i}A_{s_i}\otimes I_{[n]\setminus s_i}\right) \gamma(\mathcal{P})
  &\leq \gamma(\mathcal{P}) U^{\dag}\left(\sum_{i}B_{s_i}\otimes I_{[n]\setminus s_i}\right)U \gamma(\mathcal{P}) \\
\Longleftrightarrow &&   \gamma(\mathcal{P}) M_{\mathscr{A}} \gamma(\mathcal{P}) &\leq \gamma(\mathcal{P}) \sem{\mathbf{C}}^*(M_{\mathscr{B}}) \gamma(\mathcal{P}) \tag{\ref{target-giant-inequality}}
\end{align}
where in going from \Cref{Eq:GuardedInequalityOne} to \Cref{Eq:InequalityTwo}, we used \Cref{order};
and in going from \Cref{Eq:InequalityTwo} to \Cref{Eq:InequalityThree}, we used \Cref{Eq:GammaPSubsetPj} and \Cref{inclusion}.

\section{An Alternative Approach to the {Generalized} GHZ Circuit in Figure \ref{fig:GHZ}}
\label{sec:AlternativeGHZ}

As in \Cref{sec:ReasoningAboutTheGHZCircuit}, the analysis first applies the ``warm-up'' method from \Cref{methods} (\Cref{Eq:WarmupInequality}), and the second phase of the analysis leverages QAI.
This time, we select the domain $(\{1,2\},\{1,3\},\cdots,\{1,n\})$ for reasoning about Figure \ref{fig:GHZ}.

STEP 1. We first select the precondition to be $\{\mathscr{A} \mid \mathcal{P}\}$
\begin{align*}
\mathscr{A} &= (A_{1,2}, A_{1,3}, \cdots, A_{1,n}) = (\op{++}{++}, \op{++}{++}, \cdots, \op{++}{++}) & \\
\mathcal{P} &= (P_{1,2}, P_{1,3}, \cdots, P_{1,n}) = (\op{00}{00}, \op{00}{00}, \cdots, \op{00}{00}). &
\end{align*}
One can verify that the initial state $\op{0\cdots 0}{0\cdots 0}\vDash \mathcal{P}$.
We now use our proof rules to compute a postcondition for the GHZ circuit, 
given the precondition $\{\mathscr{A} \mid \mathcal{P} \}$.

By the \textsc{Unit-2} rule, we partition $\mathscr{A} = (A_{1,2}, A_{1,3}, \cdots, A_{1,n})$ into $(A_{1,2}), (A_{1,3}), \cdots, (A_{1,n})$.

After the first $H$ gate, we can use inequality \labelcref{{Eq:WarmupInequality}} to derive the following:
\begin{align*}
&{A_{1,i}}\otimes I_{[n]\setminus \{1,i\}}\leq H^{\dag}{B_{1,i}}\otimes I_{[n]\setminus \{1,i\}}H\\
\Longleftrightarrow &H {A_{1,i}}\otimes I_{[n]\setminus \{1,i\}}H^{\dag} \leq {B_{1,i}}\otimes I_{[n]\setminus \{1,i\}}\\
\Longleftrightarrow &H {A_{1,i}}H^{\dag}\otimes I_{[n]\setminus \{1,i\}} \leq {B_{1,i}}\otimes I_{[n]\setminus \{1,i\}}.
\end{align*}
Each inequality can be realized as an equality, yielding the postcondition $\{\mathscr{B}\mid\mathcal{Q}\}$,
\begin{align*}
\mathscr{B} &= (B_{1,2}, B_{1,3}, \dots, B_{1,n}) 
= (H A_{1,2} H^\dag, HA_{1,3}H^\dag, \dots, HA_{1,n}H^\dag) = (\ketbra{0+}{0+}, \ketbra{0+}{0+}, \dots, \ketbra{0+}{0+})\\
\mathcal{Q} &= (Q_{1,2}, Q_{1,3}, \dots, Q_{1,n}) 
= (H P_{1,2} H^\dag, HP_{2,3}H^\dag, \dots, HP_{n-1,n}H^\dag) = (\ketbra{+0}{+0}, \ketbra{+0}{+0}, \dots, \ketbra{+0}{+0}).
\end{align*}
After the first $\text{CNOT}$ gate has been applied to qubits $q_{1}q_2$, we use QAI to compute $R_{1,2}=\op{00}{00}+\op{11}{11}$, $R_{1,i}=\op{00}{00}+\op{10}{10}$, for $i > 1$.

By the \textsc{Unit-2} rule, we know
\begin{small}
\begin{align*}
&\text{CNOT}_{1,2}(I_{[n]\setminus \{1,i\}}\otimes B_{1,i})\text{CNOT}_{1,2}^{\dag}\leq I_{[n]\setminus \{1,i\}}\otimes C_{1,i}.
\end{align*}
\end{small}
We can choose $C_{1,i}=\op{0+}{0+}$, for $i\geq 2$. 
Therefore, we find that the post-state $\mathscr{C}$ predicate is
\begin{equation*}
  \label{Eq:FinalCPredicateInProof}
  (C_{1,2},C_{1,3},C_{1,4}\cdots,C_{1,n})
  =
  (\op{0+}{0+},\op{0+}{0+},\cdots,\op{0+}{0+}).
\end{equation*}
(coinciding with the pre-state predicate $\mathscr{B} = (B_{1,2}, B_{1,3}, \dots, B_{1,n})$), and the post-state QAI predicate is
\[
  \mathcal{R} = (R_{1,2},R_{1,3},\cdots,R_{1,n})
  =
  (\op{00}{00}+\op{11}{11},\op{00}{00}+\op{10}{10},\op{00}{00}+\op{10}{10},\cdots,\op{00}{00}+\op{10}{10}).  
\]
After applying the $\text{CNOT}$ gate on $q_1$ and $q_r$, the postcondition can be chosen as $(\mathscr{D} \mid \mathcal{S})$, where
\begin{align*}
  \mathscr{D}&=  (\op{0+}{0+},\op{0+}{0+},\cdots,\op{0+}{0+})\\
  \mathcal{S}&= (\op{00}{00}+\op{11}{11},\cdots,\op{00}{00}+\op{11}{11},\op{00}{00}+\op{10}{10},\cdots,\op{00}{00}+\op{10}{10}).
\end{align*}

At the right end of the circuit, after the application of  
\( U_1 \otimes U_2 \otimes \cdots \otimes U_n \)---where each \( U_i \) is a single-qubit unitary---the locality structure of the predicates remains unchanged.  
Therefore, we can choose the postcondition to be \( (\mathscr{F} \mid \mathcal{T}) \), where 
\begin{align*}
\mathscr{F}&=(\beta_1\otimes\beta_2,\cdots,\beta_{1}\otimes\beta_n)\\
\mathcal{T}&= (\psi_1\otimes \psi_2+\phi_1\otimes\phi_2,\cdots, \psi_{1}\otimes \psi_n+\phi_{1}\otimes\phi_n)
\end{align*}
with $\beta_{i}=\op{\beta_i}{\beta_i},\ \psi_i=\op{\psi_i}{\psi_i},\ \phi_i=\op{\phi_i}{\phi_i}$, and
\begin{align*}
\ket{\beta_1}&=U_1\ket{0},\ \  \ket{\beta_i}=U_{i}\ket{+}\ \  \forall i>1\\
\ket{\psi_i}&=U_i\ket{0},\ \ \ \ket{\phi_i}=U_i\ket{1}\ \ \forall i\geq 1.
\end{align*}
This implies
\begin{align} 
\frac{n-1}{4}\leq\sum_{i=2}^{n}\tr[\rho_{1,i}(\beta_1\otimes\beta_{i})].
\end{align}

STEP 2. We first select the precondition to be $\{\mathscr{A} \mid \mathcal{P}\}$
\begin{align*}
\mathscr{A}' &=  (\op{--}{--}, \op{--}{--}, \cdots, \op{--}{--}) & \\
\mathcal{P} &= (P_{1,2}, P_{1,3}, \cdots, P_{1,n}) = (\op{00}{00}, \op{00}{00}, \cdots, \op{00}{00}). &
\end{align*}
The postcondition of the full circuit can then be computed as
\[
\Big((\delta_1 \otimes \delta_2, \dots, \delta_{1} \otimes \delta_n) \;\big|\; \mathcal{T}\Big), \quad \mathrm{with} \quad 
\delta_i = \op{\delta_i}{\delta_i}, \;\; \ket{\delta_1} = U_1 \ket{1}, \;\; \ket{\delta_i} = U_i \ket{-} \;\; \forall i>1
\]
These conditions imply that 
\begin{align}
\frac{n-1}{4}\leq \sum_{i=2}^{n}\tr[\rho_{1,i}(\delta_1\otimes\delta_{i})].
\end{align}
The remainder of the argument is the same as that given in \Cref{sec:ReasoningAboutTheGHZCircuit}.

\section{Reasoning about QFT using QAI (Equation~\ref{Eq:QFTviaQAI})}
\label{sec:QFTviaQAI}

Our goal is to show that the following property holds:
\begin{align*}
  \vDash^{QAI} \{\mathcal{P}\}\  QFT\ \{(\psi_{j_n},\psi_{j_{n-1}j_n},\cdots,\psi_{j_2\cdots j_n},\psi_{j_1\cdots j_n})\}   \tag{\ref{Eq:QFTviaQAI}}.
\end{align*}

The quantum gates used in the circuit are the Hadamard gate and the phase gate $R_m$, $R_m$ does not belong to the Clifford group for $m>2$.
\begin{align*}
{ H={\frac {1}{\sqrt {2}}}{\begin{pmatrix}1&1\\1&-1\end{pmatrix}}\qquad {\text{and}}\qquad R_{m}={\begin{pmatrix}1&0\\0&e^{2\pi i/2^{m}}\end{pmatrix}}}
\end{align*} 
We use $\psi=\op{\psi}{\psi}$ to represent pure state $\ket{\psi}$, $0.x_1x_2\cdots x_n = \sum_{i=1}^n\frac{x_i}{2^i}$, and the following notation 
\begin{align}\label{notation-1}
\ket{\psi_{x}}:=\frac{1}{\sqrt{2}}(\ket{0}+e^{2\pi i 0.x}\ket{1}).
\end{align}
We choose the domain
$(\{1\},\{2\},\cdots,\{n\})$, and the precondition to be
\begin{align*}
\mathcal{P}=(P_{1},P_{2},\cdots,P_{n})=(\op{j_1}{j_1},\op{j_2}{j_2},\cdots,\op{j_n}{j_n}).
\end{align*}

At the beginning of the program, according to $\beta_k=\op{j_k}{j_k}\models_p P_k$, we have the input $$\ket{\beta}=\ket{j_1}_{q_1}\otimes\ket{j_2}_{q_2}\otimes\cdots\otimes\ket{j_{n-1}}_{q_{n-1}}\otimes\ket{j_n}_{q_n} \vDash^{QAI} \mathcal{P}.$$
After applying the first $H$ gate on $q_1$, we compute the postcondition, which becomes
\begin{align*}
(H\op{j_1}{j_1}H,\op{j_2}{j_2},\cdots,\op{j_n}{j_n})=(\psi_{j_1},\op{j_2}{j_2},\cdots,\op{j_n}{j_n})
\end{align*}
where $\ket{\psi_{j_1}}:=\frac{1}{\sqrt{2}}(\ket{0}+e^{2\pi i 0.j_1}\ket{1})$ and $\psi_{j_1}=\op{\psi_{j_1}}{\psi_{j_1}}$,
by notifying $e^{2\pi i 0.j_1}=-1$ if $j_1=0$, otherwise, $e^{2\pi i 0.j_1}=1$.
 
Applying the controlled-$R_2$ gate, we compute
\begin{align*}
P_{1,2}:=\psi_{j_1}\otimes I_2\cap I_1\otimes \op{j_2}{j_2}
=\psi_{j_1}\otimes \op{j_2}{j_2}.
\end{align*}
The postcondition becomes
\begin{align*}
&(\supp(\tr_2 CR_2 A_{1,2} CR_2^{\dag}),\supp(\tr_1 CR_2 A_{1,2} CR_2^{\dag}),\op{j_3}{j_3},\cdots,\op{j_n}{j_n})\\
= &(\psi_{j_1j_2},\op{j_2}{j_2},\cdots,\op{j_n}{j_n})
\end{align*}
We continue applying the controlled-$R_3$, $R_4$ through $R_n$ gates and compute our postcondition, each of which adds an
an extra bit to the phase of the coefficient of the first $\ket{1}$. At the end of this procedure, we have the postcondition
\begin{align*}
(\psi_{j_1j_2\cdots j_n},\op{j_2}{j_2},\cdots,\op{j_n}{j_n})
\end{align*}

Next, we perform a similar procedure on the second qubit. The Hadamard gate puts us
in the postcondition 
\begin{align*}
(\psi_{j_1j_2\cdots j_n},H\op{j_2}{j_2}H,\op{j_3}{j_3},\cdots,\op{j_n}{j_n})=(\psi_{j_1j_2\cdots j_n},\psi_{j_2},\op{j_3}{j_3},\cdots,\op{j_n}{j_n})
\end{align*}
The controlled-$R_2$ through $R_{n-1}$ gates yield the predicate
\begin{align*}
(\psi_{j_1j_2\cdots j_n},\psi_{j_2\cdots j_n},\op{j_3}{j_3},\cdots,\op{j_n}{j_n})
\end{align*}
We continue in this fashion for each qubit, giving a final predicate
\begin{align*}
(\psi_{j_1j_2\cdots j_n},\psi_{j_2\cdots j_n},\psi_{j_3\cdots j_n},\cdots,\psi_{j_n}).
\end{align*}
It follows the Swap operation between the qubit $i$ and the qubit $n+1-i$ for $1\leq i\leq n$.
After applying $SWAP(1,n)$, we obtain 
\begin{align*}
(\psi_{j_n},\psi_{j_2\cdots j_n},\cdots,\psi_{j_{n-1}j_n},\psi_{j_1\cdots j_n})
\end{align*}
by observing
\begin{align*}
\supp(\tr_n SWAP(1,n)[\psi_{j_1\cdots j_n}\otimes I_n]\cap [I_1\otimes \psi_{j_n}] SWAP(1,n) )&=\psi_{j_n}\\
\supp(\tr_1 SWAP(1,n)[\psi_{j_1\cdots j_n}\otimes I_n]\cap [I_1\otimes \psi_{j_n}] SWAP(1,n) )&=\psi_{j_1\cdots j_n}.
\end{align*}
After all the swap operations, the postcondition is
\begin{align*}
(\psi_{j_n},\psi_{j_{n-1}j_n},\cdots,\psi_{j_2\cdots j_n},\psi_{j_1\cdots j_n})
\end{align*}
The postcondition derived from QAI is an abstract state represented as a tuple of density matrices corresponding to pure quantum states.

\section{Proofs of Equations (\ref{c-1}) and (\ref{c-3}) from Section \ref{Se:qpe}}
\label{sec:ProofOfTwoEquations}

\subsection{Proof of \Cref{c-1}}
\label{sec:ProofOfC1}

We wish to show that the following triple holds:
\begin{equation*}
 \{\mathscr{A} \mid \mathcal{I}\}\ \mathbf{C_2}\ \{(\op{ j_{n-k+1} \cdots j_n}{ j_{n-k+1} \cdots j_n}) \mid \mathcal{I}\}, \tag{\ref{c-1}}
\end{equation*}
where $\mathscr{A}=(\op{\tau}{\tau}\otimes \op{\psi}{\psi})$,
$\ket{\tau} = \ket{\psi_{j_n}}\otimes \cdots \otimes \ket{\psi_{j_{n-k+1}\cdots j_n}}$, and
$M_{\mathscr{A}}=I_{1,\cdots,n-k}\otimes \op{\tau}{\tau}\otimes \op{\psi}{\psi}$.
\Cref{c-1} is equivalent to 
\begin{align*}
M_{\mathscr{A}}\leq & ~U^{\dag} [I_{1,\cdots,n-k}\otimes\op{ j_{n-k+1} \cdots j_n}{ j_{n-k+1}\cdots j_n}\otimes \op{\psi}{\psi}]U \\
                  = & ~U^{\dag} (I_{1,\cdots,n-k}\otimes\op{ j_{n-k+1} \cdots j_n}{ j_{n-k+1}\cdots j_n}) U\otimes \op{\psi}{\psi}.
\end{align*}
To see that this inequality holds, we need several observations, starting from
\begin{align*}
  & ~H(1)(I_{1,\cdots,n-k}\otimes\op{ j_{n-k+1} \cdots j_n}{ j_{n-k+1}\cdots j_n}) H(1)^{\dag}\\
= & ~CR_n(n,1) (I_{1,\cdots,n-k}\otimes\op{ j_{n-k+1} \cdots j_n}{ j_{n-k+1}\cdots j_n})CR_2(n,1)^{\dag}\\
= & ~CR_2(2,1) (I_{1,\cdots,n-k}\otimes\op{ j_{n-k+1} \cdots j_n}{ j_{n-k+1}\cdots j_n})CR_2(2,1)^{\dag}\\
= & ~H(n-k+2)(I_{1,\cdots,n-k}\otimes\op{ j_{n-k+1} \cdots j_n}{ j_{n-k+1}\cdots j_n}) H(n-k+2)^{\dag}\\
= & ~I_{1,\cdots,n-k}\otimes\op{ j_{n-k+1} \cdots j_n}{ j_{n-k+1}\cdots j_n}.
\end{align*}
That is, $I_{1,\dots,n-k} \otimes \ket{j_{n-k+1}\cdots j_n}\!\bra{j_{n-k+1}\cdots j_n}$ is invariant under $H(1)$, $CR_n(n,1)$, $CR_2(2,1)$, $\dots$, $H(n-k+2)$.
Let $V = H(1)^{-1}CR_2(2,1)^{-1}CR_{3}(3,1)^{-1}\cdots H(n-k+2)^{-1}$.
We have
\begin{align*}
V^{\dag}(I_{1,\cdots,n-k}\otimes\op{ j_{n-k+1} \cdots j_n}{ j_{n-k+1}\cdots j_n}) V=I_{1,\cdots,n-k}\otimes\op{ j_{n-k+1} \cdots j_n}{ j_{n-k+1}\cdots j_n}
\end{align*}
We let $U=VU_k$ with $U_k$ being the sub-circuit which only applied on the last $k$-qubits, i.e., 
\begin{align*}
U_k = H(n-k+1)^{-1}\cdots H(n-1)^{-1} CR_2(n,n-1)^{-1} H(n)^{-1},
\end{align*}
Then
\begin{align*}
  & ~U^{\dag} (I_{1,\cdots,n-k}\otimes\op{ j_{n-k+1} \cdots j_n}{ j_{n-k+1}\cdots j_n}) U\otimes \op{\psi}{\psi}\\
= & ~U_k^{\dag} (I_{1,\cdots,n-k}\otimes\op{ j_{n-k+1} \cdots j_n}{ j_{n-k+1}\cdots j_n}) U_k\otimes \op{\psi}{\psi}\\
= & ~I_{1,\cdots,n-k}\otimes U_k^{\dag}\op{ j_{n-k+1} \cdots j_n}{ j_{n-k+1}\cdots j_n} U_k\otimes \op{\psi}{\psi}
\end{align*}
$U_k^{\dag}$ is the standard Quantum Fourier Transform applied on input $\op{ j_{n-k+1} \cdots j_n}{ j_{n-k+1}\cdots j_n}$. Performing direct matrix multiplication (or according to the last subsection of reasoning about the Quantum Fourier Transform), we know that
\begin{align*}
 U_k^{\dag} \op{ j_{n-k+1} \cdots j_n}{ j_{n-k+1}\cdots j_n} U_k=\op{\tau}{\tau}
\end{align*}

\subsection{Proof of \Cref{c-3}}
\label{sec:ProofOfC3}

We wish to show that the following triple holds:
\begin{align*}
\{r\op{0\cdots 0}{0\cdots 0}\otimes \op{\psi}{\psi}) \mid (\op{0\cdots 0}{0\cdots 0}\otimes \op{\psi}{\psi}\}\ \mathbf{C_1}\ \{\mathscr{A}' \mid \mathcal{P}\}  \tag{\ref{c-3}}
\end{align*}
where 
\begin{align*}
r&=\Pi_{t=1}^k \cos^2[(2^{n-t}\theta-0. j_{n-t+1}\cdots j_n)\pi]=\frac{\sin^2(2^{n}\theta\pi)}{4^{k}\sin^2[(2^{n-k}\theta-0.j_{n-k+1}\cdots j_n)\pi]}\\
\mathcal{P}&=(\op{\omega}{\omega}\otimes \op{\psi}{\psi})\\
\ket{\omega}&=\frac{1}{2^{k/2}}(\ket{0}+e^{2\pi i 2^{n-1}\theta}\ket{1})\otimes \cdots\otimes (\ket{0}+e^{2\pi i 2^{n-k} \theta}\ket{1}).
\end{align*}

\noindent
We first use QAI \cite{YP21} to prove 
\begin{align}\label{key-2}
\{\cdot \mid (\op{0\cdots 0}{0\cdots 0}\otimes \op{\psi}{\psi})\}\ \mathbf{C_1}\ \{\cdot \mid \mathcal{P}\}.
\end{align}

\noindent
After applying the first $H$ gates, 
the post-condition becomes
\begin{align*}
\{\cdot \mid (\op{+\cdots +}{+\cdots +}\otimes \op{\psi}{\psi}\}.
\end{align*}

For $CU^{2^{n-1}}\cdots CU^{2^{n-k}}$,
direct matrix computation leads to the post-condition 
$\{\cdot \mid \mathcal{P}\}$.

For $CU^{2^{n-k+1}}\cdots CU^{2^{0}}$,
there will be no change to
the observable on the first $k$ qubits.
For example, for the first gate
$CU^{2^{0}}$, which is $CU$,
we have the post-condition $\{\cdot \mid \op{+\cdots +}{+\cdots +} \otimes\op{\psi}{\psi}\}$
by observing
\begin{align*}
  & ~\supp[\tr_n CU(\op{\omega}{\omega}\otimes I_{n}\otimes \op{\psi}{\psi})CU^{\dag}]\\
= & ~\supp[\tr_n CU(\op{\omega}{\omega}\otimes (\op{0}{0}+\op{1}{1})\otimes \op{\psi}{\psi})CU^{\dag}]\\
= & ~\supp[ 2 \op{\omega}{\omega}\otimes \op{\psi}{\psi}]\\
= & ~\op{\omega}{\omega}\otimes \op{\psi}{\psi}.
\end{align*}
Similarly for $CU^{2^{n-k+1}}\cdots CU^{2^{0}}$.
This argument proves \Cref{key-2}.

To prove \Cref{c-3}, we observe
\begin{align*}
&~\ \ \op{0\cdots 0}{0\cdots 0}\otimes I\otimes \op{\psi}{\psi}(r\op{0\cdots 0}{0\cdots 0}\otimes I\otimes \op{\psi}{\psi})\op{0\cdots 0}{0\cdots 0}\otimes I\otimes \op{\psi}{\psi}\\
=&~r\op{0\cdots 0}{0\cdots 0}\otimes I\otimes \op{\psi}{\psi};
\end{align*}
and 
\begin{align*}
&~\ \ \op{0\cdots 0}{0\cdots 0}\otimes I\otimes \op{\psi}{\psi}[U(C_1)^{\dag}(\op{\tau}{\tau}\otimes I\otimes \op{\psi}{\psi})U(C_1)]\op{0\cdots 0}{0\cdots 0}\otimes I\otimes \op{\psi}{\psi}\\
=&~\ \ \op{0\cdots 0}{0\cdots 0}\otimes I\otimes \op{\psi}{\psi}[V(C_1)^{\dag}(\op{\tau}{\tau}\otimes I\otimes \op{\psi}{\psi})V(C_1)]\op{0\cdots 0}{0\cdots 0}\otimes I\otimes \op{\psi}{\psi}\\
=&~x\op{0\cdots 0}{0\cdots 0}\otimes I\otimes \op{\psi}{\psi};
\end{align*}
where 
\begin{align*}
V(C_1)&=CU^{2^{n-k}}\cdots CU^{2^{n-1}}\\
x&=\langle 0\cdots 0\psi|V(C_1)^{\dag}\op{\tau\psi}{\tau \psi}\ket{0\cdots 0\psi}\\
  &=\tr[(\op{\tau\psi}{\tau\psi})V(C_1) \op{0\cdots 0\psi}{0\cdots 0\psi}V(C_1)^{\dag}]\\
  &=\tr[\op{\tau\psi}{\tau\psi}\op{\omega\psi}{\omega\psi}]\\
  &= |\ip{\tau}{\omega}|^2\\
  &=\Pi_{t=1}^k  |\frac{(\langle 0|+e^{-2\pi i 0.j_{n-t+1}\cdots j_n}\langle 1|)(\ket{0}+e^{2\pi i 2^{n-t} \theta}\ket{1}}{2}|^2\\
  &=\Pi_{t=1}^k \cos^2[(2^{n-t}\theta-0.j_{n-t+1}\cdots j_n)\pi]\\
  &=r.
\end{align*}

}{%
}

\end{document}